\newif\ifsubmission
\newtheorem{lemma}{Lemma}
\newcommand{\VV}{\mathcal V}
\colorlet{mygreen}{green!60!gray}
\begin{document}

\title{Secure Detection of Image Manipulation \\ by means of Random Feature Selection}


\author{Zhipeng Chen, Benedetta Tondi, \IEEEmembership{IEEE}, Xiaolong Li, Rongrong Ni, Yao Zhao, Mauro Barni, \IEEEmembership{Fellow, IEEE},
\thanks{Z. Chen is with Institute of Information Science, Beijing Jiaotong University, Beijing, 100044, CHINA. He is with Beijing Key Laboratory of Advanced Information Science and Network Technology, Beijing Jiaotong University, Beijing, 100044, CHINA. He is with Department of Computer Science, Tangshan Normal University, Tangshan, 063000, CHINA, e-mail: \{chzhpeng@hotmail.com\}.
	
	X. Li, R. Ni and Y. Zhao are with Institute of Information Science, Beijing Jiaotong University, Beijing, 100044, CHINA. They are with Beijing Key Laboratory of Advanced Information Science and Network Technology, Beijing Jiaotong University, Beijing, 100044, CHINA, e-mail: \{lixl@bjtu.edu.cn, rrni@bjtu.edu.cn, yzhao@bjtu.edu.cn\}.
	
	M. Barni and B. Tondi are with the Department of Information Engineering and Mathematics, University of Siena, Via Roma 56, 53100 - Siena, ITALY, phone: +39 0577 234850 (int. 1005), e-mail: \{barni@dii.unisi.it, benedettatondi@gmail.com\}.}
}

\markboth{IEEE TRANSACTIONS ON INFORMATION FORENSICS AND SECURITY, ~Vol.~X, No.~X, XXXXXXX~XXXX}%
{Z. Chen, B. Tondi et al.: Secure Detection of Image Manipulation by means of Random Feature Selection}

\maketitle

\begin{abstract}
We address the problem of data-driven image manipulation detection in the presence of an attacker with limited knowledge about the detector. Specifically, we assume that the attacker knows the architecture of the detector, the training data and the class of features $\VV$ the detector can rely on. In order to get an advantage in his race of arms with the attacker, the analyst designs the detector by relying on a subset of features chosen at random in $\VV$. Given its ignorance about the exact feature set, the adversary attacks a version of the detector based on the entire feature set. In this way, the effectiveness of the attack diminishes since there is no guarantee that attacking a detector working in the full feature space will result in a successful attack against the reduced-feature detector. We theoretically prove that, thanks to random feature selection, the security of the detector increases significantly at the expense of a negligible loss of performance in the absence of attacks. We also provide an experimental validation of the proposed procedure by focusing on the detection of two specific kinds of image manipulations, namely adaptive histogram equalization and median filtering. The experiments confirm the gain in security at the expense of a negligible loss of performance in the absence of attacks.
\end{abstract}

\begin{IEEEkeywords}
Adversarial signal processing, Adversarial machine learning, Image manipulation detection, Feature selection, Multimedia forensics and counter-forensics, Secure classification, Randomization-based adversarial detection.
\end{IEEEkeywords}

\enlargethispage{\baselineskip}

\IEEEpeerreviewmaketitle
\section{Introduction}


Developing secure image forensic tools, capable of granting good performance even in the presence of an adversary aiming at impeding the forensic analysis, turns out to be a difficult task, given the weakness of the traces the forensic analysis relies on \cite{Gloe07}. As a matter of fact, a number of Counter-Forensics (CF) tools have been developed, whose application hinders a correct image forensic analysis \cite{Boh12}.
Early CF techniques were rather simple, as they consisted in the application of some basic processing operators like noise dithering, recompression, resampling or filtering \cite{Kirch08, Cao10, Stam11}. Though often successful, the application of general post-processing operators, sometimes referred to as laundering, does not guarantee that the forensic traces are completely erased and hence does not necessarily result in the failure of the forensic analysis.

When the attacker has enough information about the forensic algorithm, much more effective CF techniques can be devised. By following the taxonomy introduced in \cite{Biggio13}, we say that we are in a Perfect Knowledge (PK) scenario, when the attacker has complete information about the forensic algorithm used by the analyst. In the PK case, very powerful CF techniques can be developed allowing the attacker to prevent a correct analysis by introducing a limited distortion into the attacked image. Generally speaking, the attacker needs {\em only} to solve an optimisation problem looking for the image which is in {\em some sense} closest to the image under attack and for which the output of the forensic analysis is the wrong one. Even if such an optimisation problem may not be always easy to solve, the exact knowledge of the decision function allows the application of powerful techniques either in closed form \cite{Font12, ComeOPT13, Pasq14}, or by relying on gradient-descent mechanisms \cite{Marra15, BarZhip17}.

In many cases, the attacker has only a Limited Knowledge  about the forensic algorithm \cite{Biggio13}. Let us consider, for example, the case of a machine-learning-based detector looking for the traces left within an image by a particular processing algorithm. The attacker may know only the type of detector used by the analyst, e.g. a Support Vector Machine (SVM) or a Neural Network, and the feature space wherein the analysis is carried out, but he may not have access to the training data. In this case, the attacker can build a surrogate version of the detector by using its own training data, and carry out the attack on the surrogate detector, hoping that the attack will also work on the detector used by the analyst \cite{Biggio13, Marra15, BarZhip17}. In other cases, the attacker may know only the feature space used by the detector. In such a situation, he may resort to a so-called universal CF strategy capable of defeating any detector working in the given feature space \cite{BarniMMSEC}. In most cases, the attack works by modifying the attacked-image so that its representation in the feature space is as close as possible to the representation of an image chosen in a dataset of images belonging to the desired class (e.g. non-compressed or pristine images) \cite{Marra15}. In \cite{BarniIJDCF}, for instance, the attack works by bringing the histogram of the attacked image as close as possible to that of an image belonging to a reference dataset of pristine images, by solving an optimal transport problem. In \cite{Bar14IWDW}, a similar strategy is applied in the DCT domain to attack any double JPEG detector relying on the first order statistics of block DCT coefficients.

Several anti-CF techniques have also been developed. The most common approach consists in looking for the traces left by the CF tools, and develop new forensic algorithms explicitly thought to expose images subjected to specific CF techniques. The search for CF traces can be carried out by relying on new features explicitly designed for this target as in \cite{Lai11,Vale13, ZengMedian14, Derosa15}, or, from a more general perspective, by using the same features of the original forensic technique and design an adversary-aware version of the classifier, as in \cite{BarZhip16, BNT17}. In the latter case, it is recommendable to adopt a large feature space allowing enough flexibility to distinguish original and tampered images as well as images processed with the CF operator. If we assume that the attacker knows that the traces left by the CF tools may themselves be subjected to a forensics analysis, we fall in a situation wherein CF and anti-CF techniques are iteratively developed in a never-ending loop, whose final outcome can hardly be foreseen \cite{BarGon13}. Some attempts to cast the above race of arms between the forensic analyst and the attacker by resorting to game theory have been made in \cite{BT13} and \cite{Stamm12}.
In some cases, it is also possible to predict who between the attacker and the analyst is going to win the game according to the distortion that the attacker may introduce to impede the forensic analysis \cite{BT16}.
Yet in other works, the structure of the detector is designed in such a way to make CF harder. In \cite{biggio2010multiple}, the output of multiple classifiers is exploited to design an ensemble classifier exhibiting improved resilience against adversarial attempts to induce a detection error. In \cite{biggio2015one}, the robustness of a two-class classifier and the inherent superior security of one-class classifiers are exploited to design a {\em one and a half class} detector, that is proven to provide an extra degree of robustness under adversarial conditions. Other approaches to improve the security of machine-learning classifiers are described in \cite{Lowd05, Barreno2010, biggio2015bio}, for applications outside the realm of image forensics.
Despite all the above attempts, however, when the attacker knows the feature space used by the analyst, very powerful CF strategies can be designed whose effectiveness is only partially mitigated by the adoption of anti-CF countermeasures.

In order to restore the possibility of a sound forensic analysis in an adversarial setting, and give the analyst an advantage in his race of arms with the attacker, in this work, we propose to randomise the selection of the feature space wherein the analysis is carried out. To be specific, let us assume that to achieve his goal - hereafter deciding between two hypotheses $H_0$ and $H_1$ about the processing history of the inspected image - the analyst may rely on a large set  $\VV$ of, possibly dependent, features. The number of features used for the analysis may be in the order of several hundreds or even thousands; for instance, they may correspond to the SPAM features described in \cite{Pevny10} or the rich feature set introduced in \cite{Frid12rich}. In most cases, the use of all the features in $\VV$ is not necessary and good results can be achieved even by using a small subset of $\VV$. Our proposal to {\em secure} the forensic analysis is to randomise it by choosing a random subset of $\VV$ - call it $\VV_r$ - and let the analysis rely on $\VV_r$ only; in a certain sense, the {\em randomisation} of the feature space can be regarded as a secret key used by the analyst to improve the security of the analysis. Given its ignorance about the exact feature set used by the analyst, a possibility for the attacker is to attack the entire feature set $\VV$. As we will show throughout the paper, with both theoretical and experimental results, not only attacking the entire set $\VV$ increases the complexity of the attack, but it also diminishes its effectiveness, since there is no guarantee that attacking a detector working in the full feature space will also result in a successful attack against a detector working in the reduced set $\VV_r$.

The rest of this paper is organised as follows. In Section \ref{sec.soa}, we revise prior works using randomisation to improve the security of image forensic techniques and more in general that of any detector or classifier. In Section \ref{sec.theory}, we give a rigorous formulation of image manipulation detection via random feature selection, and analyse the theoretical performance of the random detector under simplified, yet reasonable, assumptions. In Section \ref{sec.exa}, we exemplify the general strategy introduced in Section \ref{sec.theory} by developing an SVM detector based on randomised feature selection within a restricted subset of SPAM features, designed to detect two different kinds of image manipulations, namely adaptive histogram equalization and median filtering. In Section \ref{sec.exp}, we analyse the security of the detectors described in Section \ref{sec.exa} against targeted attacks carried out in the feature and the pixel domains. As it will be evident from the experimental analysis, random feature selection considerably increases the strength required for a successful attack at the expense of a negligible performance loss in the absence of attacks. Finally, in Section \ref{sec.con}, we draw our conclusions and highlight some directions for future research.

\section{Related works}
\label{sec.soa}

The use of randomisation to improve the security of a detector or a classifier is not an absolute novelty since it has  already been proposed is several security-oriented applications.
%

Early attempts to use randomisation as a countermeasure against attacks were focusing on probing or oracle attacks, i.e. attacks that repeatedly query the detector in order to get information about it and then use such an information to build an input signal that evades the detection\footnote{Probing attacks are usually carried out when no any a-priori knowledge about the classifier is available to the attacker.} (see for instance \cite{linnartz1998analysis} for an example related to one-bit watermarking and \cite{Barreno2010} for the use of randomisation in the context of machine learning). In all these works, the outcome of the detector is randomised by letting the output be chosen at random for points in the proximity of the detection boundary. In general, boundary randomisation only increases the effort necessary to the attacker to enter (or exit) the detection region; in addition, it also causes a loss of performance in the absence of attacks, that is, the robustness of the system decreases.

Other strategies exploiting randomness to prevent the attacker from gathering information about the detector have been adopted for spam filtering, intrusion detection \cite{biggio2010multiple} and multimedia authentication \cite{koval2008security}. A rather common approach consists in the use of randomisation in conjunction with multiple classifiers. In \cite{breiman1996bagging}, for instance, randomness pertains to the selection of the training samples of the individual classifiers, while in \cite{ho1998random} is associated to the selection of the features used by the classifiers (hereafter referred to as random subspace selection), each of which is trained on the entire training set.
Another randomisation strategy, used in conjunction with multiple classifiers, has been proposed in \cite{biggio2008adversarial} and experimentally evaluated for spam filtering applications. Specifically, an additional source of randomness is introduced in the choice of the
weights of the filtering modules of the individual classifiers.
Random subspace selection has also been adopted in steganalysis \cite{Kod11}, though with a different goal, that is to reduce the problems encountered when working with extremely high-dimensional feature spaces.

The use of randomization for security purposes is also common in multimedia hashing for authentication and content identification \cite{koval2008security, villan2007tamper}. In these works, {\em random projections} are often employed to elude attacks: specifically, a secret key is used to generate a random matrix which is then employed to generate the hash bits of the content, by first projecting the to-be-authenticated signal on the directions identified by the rows of the matrix and then comparing the absolute value of the projections against a threshold. Despite the apparent similarity, the use of random projections for multimedia hashing differs substantially from the technique proposed in this paper. In multimedia hashing applications, the random projection is applied directly in the pixel or in a transformed domain, and is possibly followed by the use of channel coding to improve robustness against noise. The kind of traces we are looking for in multimedia forensics applications, however, are so weak that a completely random choice of the feature space would not work. For this reason, in our system, randomisation is applied within a set of features explicitly designed to work in a specific multimedia forensics scenario.
As a matter of fact, the system proposed in this paper can be seen as the application of the random projection method directly in the feature space, with the projection matrix designed in such way to contain exactly one non-zero in each row, with the additional computational advantage that only the selected features need to be calculated, while a full projection matrix would require the computation of the entire feature set.

The use of feature selection for security purposes has also been proposed in \cite{zhang2016adversarial}. Although the idea of resorting to a reduced feature set is similar to our proposal, the set up considered in \cite{zhang2016adversarial} differs considerably from the one adopted in this paper. In \cite{zhang2016adversarial}, in fact, the authors search for the best reduced feature set against an attacker with perfect knowledge about the detector, i.e. an attacker who knows the choice of the feature subset made by the defender. This is different from the scenario considered here, where feature randomization works as a kind of secret key.

\section{Secure detection by random feature selection}
\label{sec.theory}

In this section, we first describe the security assumptions behind our work, then we give a rigorous definition of binary detection based on Random Feature Selection (RFS) and provide a theoretical analysis to evaluate the security vs robustness trade-off under a simple statistical model. Though derived under simplified assumptions, the theoretical analysis is an insightful one since it provides useful insights on the impact that the statistics of the host features has on the security of the randomised detector. In addition, it permits to analyse the dependence of classification accuracy on the number of selected features both in the presence and in the absence of attacks. Even if the paper focuses on RFS, the theoretical framework is a general one and can also be used to analyse other kinds of (linear) feature randomisation.

\subsection{Security model}

The security model adopted in this paper is depicted in Fig. \ref{fig:secmodel}. As shown in the picture, we assume that the attack is carried out directly in the feature space. This is not a problem when an invertible relationship exists between the pixel and the feature domain, as it is the case, for instance, of detectors based on block DCT coefficients or their histogram \cite{BarniMMSEC, Bar14IWDW}. In other cases, however, mapping back the attack into the pixel domain may not be easy, or even possible, since there is no guarantee that the attacked feature vector is a feasible one, i.e., that an image exists whose features are equal to those resulting from the attack. Nevertheless, analysing the performance of randomized feature detection in this scenario, which is undoubtedly more favourable for the attacker, provides interesting insights about the security improvement achieved by RFS.
We also assume that the attacker is interested in inducing a missed detection error, while minimising the distortion introduced in the image as a consequence of the attack.

\begin{figure}[t!]
	\centering	
		\includegraphics[width=0.99\columnwidth]{./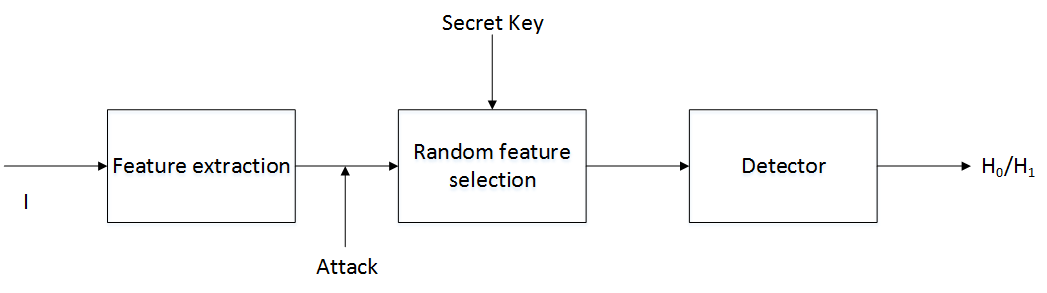}	
	\caption{Security model for the RFS detector.}
	\label{fig:secmodel}
\end{figure}

With regard to the knowledge available to the attacker, we assume that he knows everything with the exception of the subset of features used by the detector. This includes knowledge of the detector architecture, the full feature set $\VV$, and the training set (in case of a detector based on machine learning). We also assume that it is not possible for the attacker to query the detector to observe the result of the detection on one or more specific inputs. This means that the attacker can not run oracle or mimicry attacks on the reduced feature detector to infer information about the subset of features used by the detector \cite{cox1997public,wang2006anagram}.

In order to assess the security of the RFS approach when faced with attacks carried  out in the pixel domain, in section \ref{sec.pixdomain} we show the results of some experiments where an RFS detector trained to detect adaptive histogram equalization and median filtering is subject to an attack working in the pixel domain (see section \ref{subsec.att2} for a description of the attack).

\subsection{Problem formulation}

Let ${\bf v}$ be an $ n $-long column vector with the image features the detector relies on. The detector aims at distinguishing between two hypotheses: $ H_0 $ - "the image is manipulated", and $ H_1 $ - "the image is original". We assume that the probability density function of ${\bf v}$ under the two hypotheses is as follows:
\begin{align}
& H_0: {\bf v} \simeq {\mathcal N}({\bf u}, \Sigma_0) \\ \nonumber
& H_1: {\bf v} \simeq {\mathcal N}({\bf -u}, \Sigma_1),
\label{eq.model}
\end{align}
where ${\mathcal N}({\bf u}, \Sigma_0)$ (res., ${\mathcal N}(-{\bf u}, \Sigma_1)$), is a multivariate Gaussian distribution with mean $\bf u$  (res., $-\bf u$), and covariance matrix $\Sigma_0$ (res., $\Sigma_1$). Note that assuming that the mean vectors under the two hypotheses are one the opposite of the other does not cause any loss of generality. In fact, if this is not the case, we can always apply a translation of the feature vector, for which such an assumption holds.

The derivation of the optimal detector, under the assumption that the a-priori probabilities of $H_0$ and $H_1$ are equal, passes through the computation of the log-likelihood probabilities of observing ${\bf v}$ under $H_0$ and $H_1$. 
In the following, we will carry out our analysis by assuming that $\Sigma_0 = \Sigma_1= \Sigma$. In this case, the optimum detector is a correlation-based detector \cite{Kay}, according to which the detector decides for $H_0$ if:
\begin{equation}
\label{eq.optdet}
\rho = (\Sigma^{-1} {\bf u})^T {\bf v} > 0.
\end{equation}
In the rest of the section, we will refer to the detector defined by equation \eqref{eq.optdet} as the optimum {\em full-feature} detector.

In order to improve the security of the system, the detector randomises the decision strategy by relying on a reduced feature vector ${\bf v}_r = S {\bf v}$ where $S$ is a random $k\times n$-dimensional matrix. Several different randomisation strategies can be adopted according to the form of $S$. For the RFS detector proposed in this paper, $S$ is a matrix whose entries are all zeros except for a single element of each row which is equal to one. In addition, all nonzero entries are located in different columns. This corresponds to form ${\bf v}_r$ by selecting at random $k$ elements of ${\bf v}$. Another possibility consists in generating all the elements of $S$ independently according to a given distribution (e.g. Gaussian) and normalising the entries so that the Euclidean norm of each row is equal to one. In this way, the randomised detector relies on the projections of the vector ${\bf v}$ on $k$ random directions (Random Projection - RP).



\subsection{Theoretical analysis}

In this section, we analyse the trade-off between security and robustness by evaluating the performance of the randomised detector with and without attacks. As we will see, by lowering $k$, the security of the detector increases at the price of a loss of performance in the absence of attacks, with a better trade-off reached by the RFS detector.

\subsubsection{Performance in the absence of attacks (robustness)}

As highlighted in the previous section, the sufficient statistics for the optimum full-feature detector is given by $\rho  = \left( {{{\Sigma}^{ - 1}}{\bf u} } \right)^T {\bf v}$.
The performance of the full-feature detector depend on the statistics of $\rho$. Due to the normality of ${\bf v}$, $\rho$ is also normally distributed with mean and variance under $H_0$ given by:
\begin{equation}
\label{eq.stat_rho}
\begin{split}
E\left[ \rho | H_0 \right] &= \left( {{{\Sigma}^{ - 1}}{\bf u} } \right)^T {\bf u}  = {\bf u} ^T {{\Sigma}^{ - 1}}{\bf u} \\
var[\rho | H_0 ] &= ({{\Sigma}^{ - 1}}{\bf u} )^T {\Sigma}{\Sigma}^{ - 1}{\bf u}  = {\bf u} ^T{\Sigma}^{ - 1}{\bf u}.
\end{split}
\end{equation}
Similar values are obtained under $H_1$, by replacing ${\bf u}$ with $-{\bf u}$. The error probability of the detector is related to the $ z $-value of the normal distribution, which is equal to:
\begin{align}
z = \frac{{{{\bf u} ^T}{{\Sigma}^{ - 1}}{\bf u} }}{{{{\left( {{{\bf u} ^T}{{\Sigma}^{ - 1}}{\bf u} } \right)}^{1/2}}}} = {\left( {{{\bf u} ^T}{{\Sigma}^{ - 1}}{\bf u} } \right)^{1/2}}.
\label{eq.z2_fullD}
\end{align}
Note that $z$ is always positive since $\Sigma$ is a positive-definite matrix and that higher values of $z$ correspond to a lower error probability. 
Due to the symmetry of the problem, the two error probabilities, i.e., the probability of deciding in favour of $H_0$ when $H_1$ holds and the reverse probability of deciding for $H_1$ when $H_0$ holds, have the same value; hence, in the following, we will generally refer to the error probability as $P_e$.

In the case of randomised detection, the feature vector is: $ {\bf v}_r=S \cdot {\bf v} $. For a given $S$, the statistics of the observations under the two hypotheses are as follows:
\begin{equation}
\begin{split}
{H_0} & :{{\bf v}_r} \simeq \;N({\bf u}_r, \Sigma_r)\\
{H_1} & :{{\bf v}_r}\; \simeq N( - {\bf u}_r, \Sigma_r),
\end{split}
\label{eq.model2}
\end{equation}
where we let ${\bf u}_r = S {\bf u}$ and $\Sigma_r = S{\Sigma}{S^T}$. The optimum detector now decides for $ H_0 $ when ${\rho _r} = \left( {\Sigma_r^{- 1}{\bf u} _r} \right)^T {\bf v}_r > 0$.
As for the full detector, $\rho _r$ is a Gaussian r.v. with mean and variance (under $H_0$) given by:
\begin{equation}
\begin{split}
E\left[ \rho_r | H_0 \right] &= {\bf u}_r^T {\Sigma}_r^{-1}{\bf u}_r \\
var[\rho_r | H_0 ] &= {\bf u}_r^T{\Sigma}_r^{-1}{\bf u}_r.
\end{split}
\label{eq.rho2_reduced_stat}
\end{equation}
Once again the error probability depends on the $z$-value of the randomized detector, which is  ${z_r} = \left( {\bf u}_r^T \Sigma_r^{-1} {\bf u}_r \right)^{1/2}$.
By introducing the factor
\begin{equation}
\label{eq.eta}
    \eta = \frac{{\bf u} _r^T \Sigma_r^{-1} {\bf u}_r}{{\bf u}^T \Sigma^{-1} {\bf u}},
\end{equation}
the $z$-value of the randomised detector can be related to that of the full-feature detector as follows:
%
\begin{equation}
\label{eq.2z}
    z_r = \sqrt{\eta}z.
\end{equation}
Given that $\eta$ is always lower than one and decreases when $k$ decreases, equations \eqref{eq.eta} and \eqref{eq.2z} determine the loss of performance due to the use of the randomised reduced-feature detector instead of the full-feature one. Since in this case the errors are due to the natural variability of the observed features, the loss of performance can be regarded as a loss of robustness.

\subsubsection{Performance under attack (security)}

Given that the attacker does not know the subset of features used by the randomised detector and, in principle, he does not even know about the randomization-based defence mechanism, we can assume that he keeps attacking the full-feature detector (an alternative strategy is considered in Section \ref{sec.attackNew}). Without loss of generality, we will assume that the attacker takes a sequence generated under $H_0$ and modifies it in such a way that the detector decides in favour of $H_1$.
The optimum attack is the one that succeeds in inducing a decision error while minimising the distortion introduced within ${\bf v}$. Such an attack is obtained by moving the vector ${\bf v}$ orthogonally to the decision boundary until $\rho = 0$, leading to:
\begin{equation}
    {\bf v}^* = {\bf v} - \alpha \frac{\left( {{{\Sigma}^{ - 1}}{\bf u} } \right)^T {\bf v}}{\| \Sigma^{-1} {\bf u} \|^2 } \cdot \Sigma^{-1} {\bf u},
\label{eq.optattack}
\end{equation}
where $\bf{v}^*$ is the attacked feature vector and $\alpha$ is a parameter controlling the strength of the attack: with $\alpha = 1$ the attacked vector is moved exactly on the decision boundary, however the attacker may decide to use a larger $\alpha$ (introducing a larger distortion) so to move the attacked vector more inside the wrong decision region, hence increasing the probability  that the attack is successful also against the randomised detector.

By construction, when applied against the full feature detector the above attack is always successful. We now investigate the effect of the attack defined in \eqref{eq.optattack}, when the analyst uses a randomised detector based on a reduced set of features. We start by observing that, as a consequence of the attack, the value of $\rho_r$ evaluated by the randomised detector is:
\begin{equation}
\label{eq.attacked_rho}
\rho_r = {\bf u}_r^T \Sigma_r^{-1} {\bf v}_r - \alpha \frac{{\bf u}_r^T \Sigma_r^{-1} {\bf w}_r}{||{\bf w} ||^2} {\bf w}^T {\bf v},
\end{equation}
where we let ${\bf w} = \Sigma^{-1} {\bf u}$, ${\bf w}_r = S {\bf w}$, and $\theta = \alpha ({\bf u}_r^T \Sigma_r^{-1} {\bf w}_r/||{\bf w} ||^2)$.
The statistics of $\rho_r$ are given by the following lemma.
\begin{lemma}
\label{eq.lemmastat}
By letting $y = {\bf u}_r^T \Sigma_r^{-1} {\bf u}_r$, and $x = {\bf u}^T \Sigma^{-1} {\bf u}$, we have that under $H_0$:
\begin{equation}
\begin{aligned}
\label{eq.mean_var_rhor}
    E[\rho_r] = y - \theta x, \hspace{0.5cm}  var[\rho_r] = y + \theta^2 x - 2 \theta y.
\end{aligned}    
\end{equation}
%
%
\end{lemma}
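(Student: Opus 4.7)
The plan is to rewrite $\rho_r$ as a single linear functional of $\bf v$ and then apply the standard formulas for the mean and variance of a linear form of a Gaussian random vector. Since under $H_0$ we have ${\bf v}\sim\mathcal{N}({\bf u},\Sigma)$, once $\rho_r$ has the form ${\bf c}^T{\bf v}$ for some deterministic vector $\bf c$ (deterministic for a fixed realization of $S$), we immediately get $E[\rho_r]={\bf c}^T{\bf u}$ and $var[\rho_r]={\bf c}^T\Sigma{\bf c}$, and the problem reduces to a bookkeeping exercise.

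The first step is to observe that ${\bf u}_r^T\Sigma_r^{-1}{\bf v}_r={\bf u}_r^T\Sigma_r^{-1}S{\bf v}=(S^T\Sigma_r^{-1}{\bf u}_r)^T{\bf v}$, so that $\rho_r={\bf c}^T{\bf v}$ with ${\bf c}=S^T\Sigma_r^{-1}{\bf u}_r-\theta{\bf w}$. The whole calculation then rests on three identities that I would record up front: (i) $S{\bf u}={\bf u}_r$, by definition of ${\bf u}_r$; (ii) $S\Sigma S^T=\Sigma_r$, by definition of $\Sigma_r$; and (iii) ${\bf w}^T{\bf u}={\bf u}^T\Sigma^{-1}{\bf u}=x$, by definition of $\bf w$ and $x$.

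For the mean, I would compute ${\bf c}^T{\bf u}$ term by term: the first piece yields ${\bf u}_r^T\Sigma_r^{-1}S{\bf u}={\bf u}_r^T\Sigma_r^{-1}{\bf u}_r=y$ by identity (i), while the second piece contributes $\theta{\bf w}^T{\bf u}=\theta x$ by identity (iii). Subtracting gives $E[\rho_r]=y-\theta x$. For the variance, I would expand ${\bf c}^T\Sigma{\bf c}$ into three terms. The purely quadratic term in $S^T\Sigma_r^{-1}{\bf u}_r$ collapses via identity (ii) to ${\bf u}_r^T\Sigma_r^{-1}\Sigma_r\Sigma_r^{-1}{\bf u}_r=y$; the cross term simplifies through $\Sigma{\bf w}=\Sigma\Sigma^{-1}{\bf u}={\bf u}$ together with identity (i), giving $-2\theta\,{\bf u}_r^T\Sigma_r^{-1}{\bf u}_r=-2\theta y$; and the last term is $\theta^2{\bf w}^T\Sigma{\bf w}=\theta^2{\bf u}^T\Sigma^{-1}{\bf u}=\theta^2 x$. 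Summing gives the claimed $var[\rho_r]=y+\theta^2x-2\theta y$.

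There is no genuine obstacle here: the statement follows from linearity of expectation and the quadratic-form identity for Gaussian variances, once $\rho_r$ is put into the single-linear-form shape ${\bf c}^T{\bf v}$. The only place where one has to be careful is in the variance cross term, where the three factors $S$, $\Sigma$ and $\Sigma^{-1}$ must collapse in the right order; keeping track of which $\Sigma$ is cancelled by which $\Sigma^{-1}$ (and avoiding the temptation to cancel $\Sigma_r^{-1}$ with $\Sigma$ directly, which would be wrong since they live in different dimensions) is the only subtle point.
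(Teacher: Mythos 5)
Your proposal is correct: writing $\rho_r={\bf c}^T{\bf v}$ with ${\bf c}=S^T\Sigma_r^{-1}{\bf u}_r-\theta{\bf w}$ and applying $E[{\bf c}^T{\bf v}]={\bf c}^T{\bf u}$, $var[{\bf c}^T{\bf v}]={\bf c}^T\Sigma{\bf c}$ yields exactly the claimed mean and variance, and the three identities you isolate ($S{\bf u}={\bf u}_r$, $S\Sigma S^T=\Sigma_r$, $\Sigma{\bf w}={\bf u}$) are precisely what make the terms collapse. This is essentially the paper's argument in a more compact packaging: the paper instead splits $\rho_r$ into the two correlated scalars $\rho_{r,1}={\bf u}_r^T\Sigma_r^{-1}{\bf v}_r$ and $\rho_{r,2}={\bf w}^T{\bf v}$, derives their individual means, variances and the covariance $cov[\rho_{r,1}\rho_{r,2}]=y$ via the second-moment identity $E[{\bf v}{\bf v}^T]=\Sigma+{\bf u}{\bf u}^T$, and then combines them with $var[\rho_r]=var[\rho_{r,1}]+\theta^2 var[\rho_{r,2}]-2\theta\,cov[\rho_{r,1}\rho_{r,2}]$ --- your single-linear-form route performs the identical algebra but sidesteps the explicit covariance computation.
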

\begin{proof}
See the appendix.
\end{proof}

It is worth observing that Lemma \ref{eq.lemmastat} holds under the assumption that the attacker applies equation \eqref{eq.optattack} even when the full-feature detector decides for  $H_1$, that is when  $\rho < 0$. In general this will not be the case, since when the detector already  makes an error due to the presence of noise, the attacker has no interest to modify ${\bf v}$. In fact, in this case, the result of the application of equation \eqref{eq.optattack} would be the correction of the error made by the detector. Given that deriving the statistics of $\rho_r$ by taking into account that the attack is present only when $\rho > 0$ is an intractable problem, we assume that equation \eqref{eq.optattack} always holds. The results obtained in this way provide a good approximation of the real performance of the system when the error probability of the full-feature detector in the absence of attacks is negligible, i.e., when $z$ is much larger than 1.

%
%
The $z$-value of the randomised detector under attack  can then be written as
\begin{equation}
\label{eq.zatt2}
    z_{att} = z \frac{\eta - \theta}{\sqrt{\eta + \theta^2 -2 \eta \theta}},
\end{equation}
where $\eta$, defined as in \eqref{eq.eta}, is equal to $y/x$. Note that, due to the attack, $z_{att}$ can also take negative values thus resulting in a large error probability.

The interpretation of equation \eqref{eq.zatt2} is rather difficult since $\eta$ and $\theta$ depend on ${\bf v}$ in a complicated way. In the next section we will use numerical simulations to get more insight into the performance predicted by \eqref{eq.zatt2}. Here we observe that the expression of $z_{att}$ can be simplified considerably when $\Sigma = \sigma^2 I$, that is when the features are independent and have all the same variance, and the rows of $S$ are orthogonal, as with RFS. In this case, it is easy to see that:
\begin{equation}
\label{eq.iid}
\begin{aligned}
    \theta = \alpha \frac{|| {\bf u}_r||^2}{|| {\bf u}||^2}, ~\eta &= \frac{|| {\bf u}_r||^2}{|| {\bf u}||^2},
\end{aligned}
\end{equation}
and hence $\theta = \alpha \eta$. Equation \eqref{eq.zatt2} can then be rewritten as:
\begin{equation}
\label{eq.zatt2iid}
    z_{att} = z \frac{\eta (1- \alpha)}{\sqrt{\eta + \alpha^2 \eta^2 - 2 \alpha \eta^2}}.
\end{equation}
From equations \eqref{eq.iid} and \eqref{eq.zatt2iid}, we see that the error probability under attack depends only on $\eta$ , that is the ratio of the norm of the vector with the mean value of the reduced set of features and the norm of the full-feature mean vector. Clearly, when the number $k$ of features used by the randomised detector decreases, the value of $\eta$ decreases as well. Given that the numerator in \eqref{eq.zatt2iid} is either null or negative, and given that the quantity
\begin{equation}
\label{eq.deponeta}
    \frac{\eta}{\sqrt{\eta + \alpha^2 \eta^2 - 2 \alpha \eta^2}} = \frac{1}{\sqrt{\frac{1}{\eta} + \alpha^2 - 2 \alpha}}
\end{equation}
increases when $\eta$ decreases, the error probability under attack decreases with $k$. In fact, if $\eta$ approaches zero, i.e., $k = 1$, $z_{att}$ will be close to 0, and the error probability under attack tends to 0.5. In other
words, the probability that an attack against the full feature detector is also affective against a reduced detector based on one feature only is 0.5 (the improved security comes at the price of a reduced effectiveness in the
absence of attacks, as stated by equation \eqref{eq.2z}). As we will see in the next section, even better results are obtained when the features are not independent. Expectedly, it is also easy to see that when $\eta = 1$, that is
when all the features are used, $z_{att} = -z$, hence resulting in a very large error probability\footnote{In fact, when $\eta = 1$ the error probability should be equal to 1. This is not the case in the present analysis due to the
assumption we made that the attacker always applies equation \eqref{eq.optattack}, even when $\rho < 0$.}.

\subsection{Numerical results}
\label{subsec.numres}

In this section, we use numerical analysis to study the performance of the randomised detector both in the presence and in the absence of attacks as predicted by equations \eqref{eq.2z} and \eqref{eq.zatt2}.

We start with the simple case of i.i.d. features. The performance predicted by the theory are reported in Fig. \ref{fig:iid}, where we show the dependence on $k$ of the missed detection error probability both in the presence (upper curves) and in the absence (lower curves) of attacks\footnote{Due to the symmetry of the problem, in the absence of attacks, the missed detection probability is equal to the overall error probability.}. The plots have been obtained by letting $n = 300$, $z = 4$, and averaging over 500 random choices of the matrix $S$. We set $\alpha = 1.2$ for the leftmost plot and $\alpha = 2$ for the rightmost.
As it can be seen, no particular difference can be noticed between the RFS and RP detectors. The security of the randomised detector increases for lower values of $k$,
while the performance in the absence attacks decreases. As predicted by equation \eqref{eq.deponeta}, when the number of features used by the detector tends to 1, the
error probability in the presence of attacks tends to 0.5, thus showing that the attack designed to defeat the full-feature detector fails almost half of the times when the reduced feature detector is used. Expectedly, the attack is more successful for larger $\alpha$.


\begin{figure}[t]
	\centering	
	\subfigure[]{
		\includegraphics[width=0.48\columnwidth, trim={40 180 50 180},clip]{./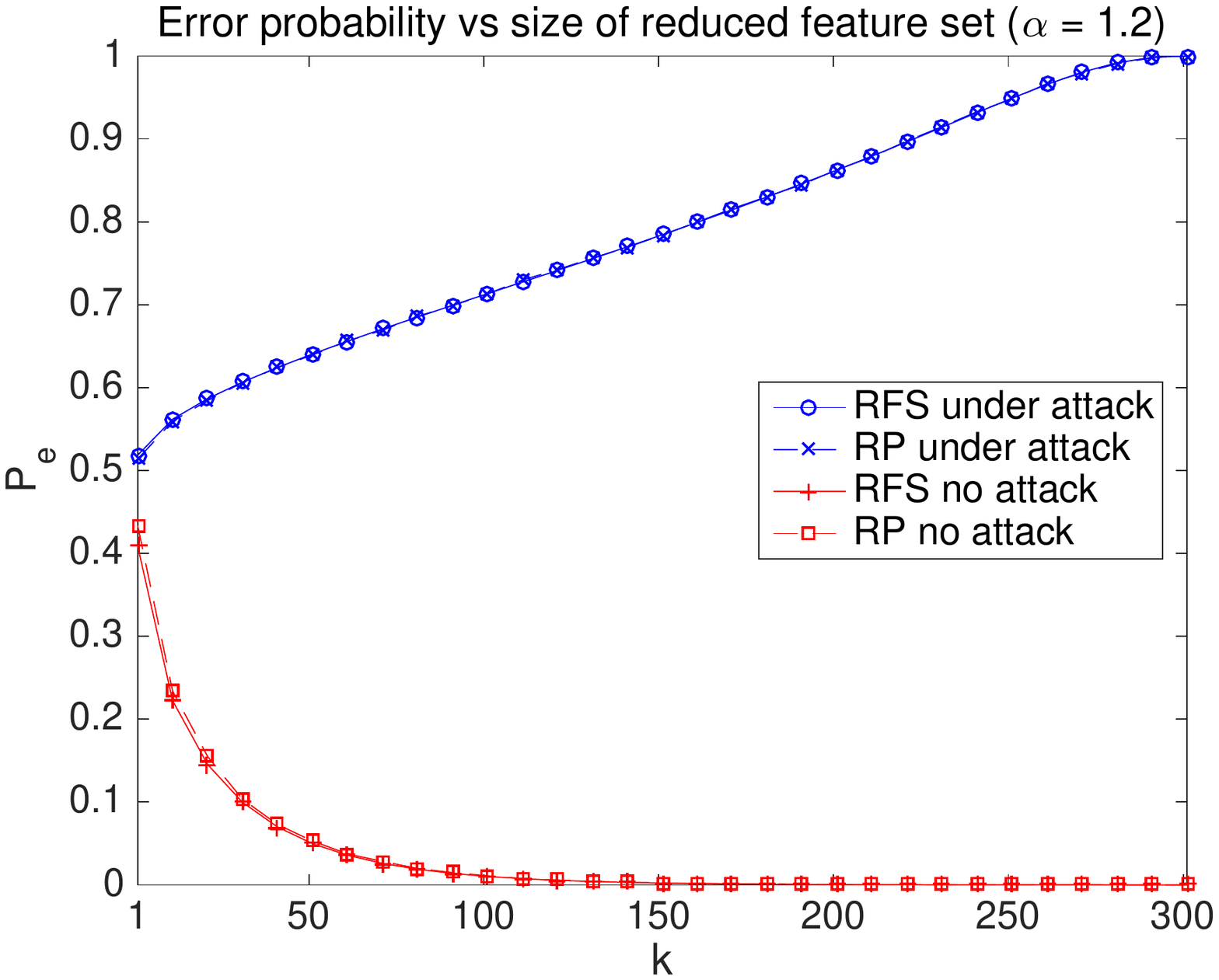}}
	\subfigure[]{
		\includegraphics[width=0.48\columnwidth, trim={40 180 50 180},clip]{./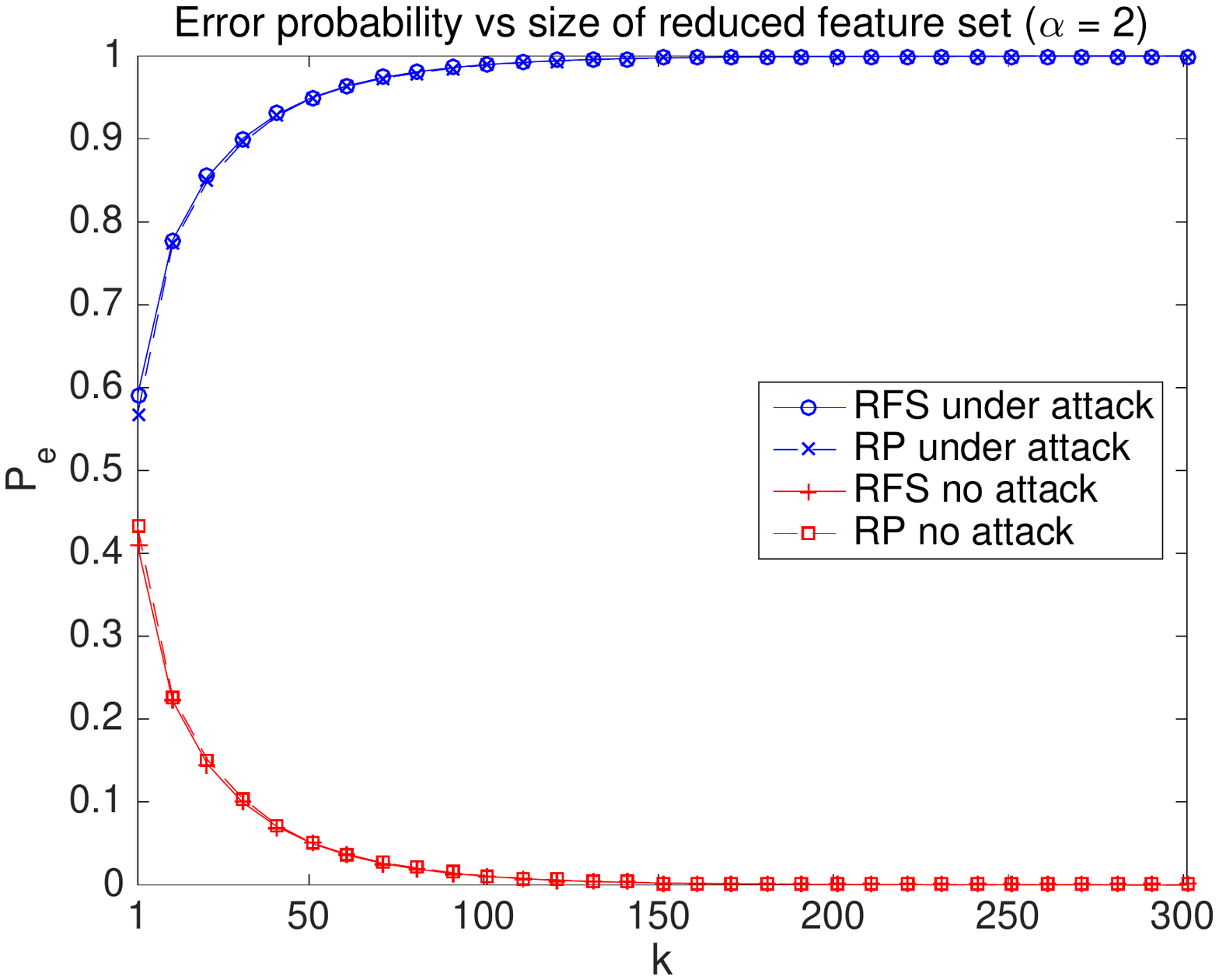}}	
	\caption{Missed detection error probability of the randomised feature detector with and without attacks in the case of i.i.d. features. The plots have been obtained by letting $z = 4$, $n = 300$, $\alpha = 1.2$ (a), $\alpha = 2$ (b), and averaging over 200 random choices of the matrix $S$. }
	\label{fig:iid} 
\end{figure}

\begin{figure}[t]
	\centering	
	\subfigure[]{
		\includegraphics[width=0.48\columnwidth, trim={30 180 50 180},clip]{./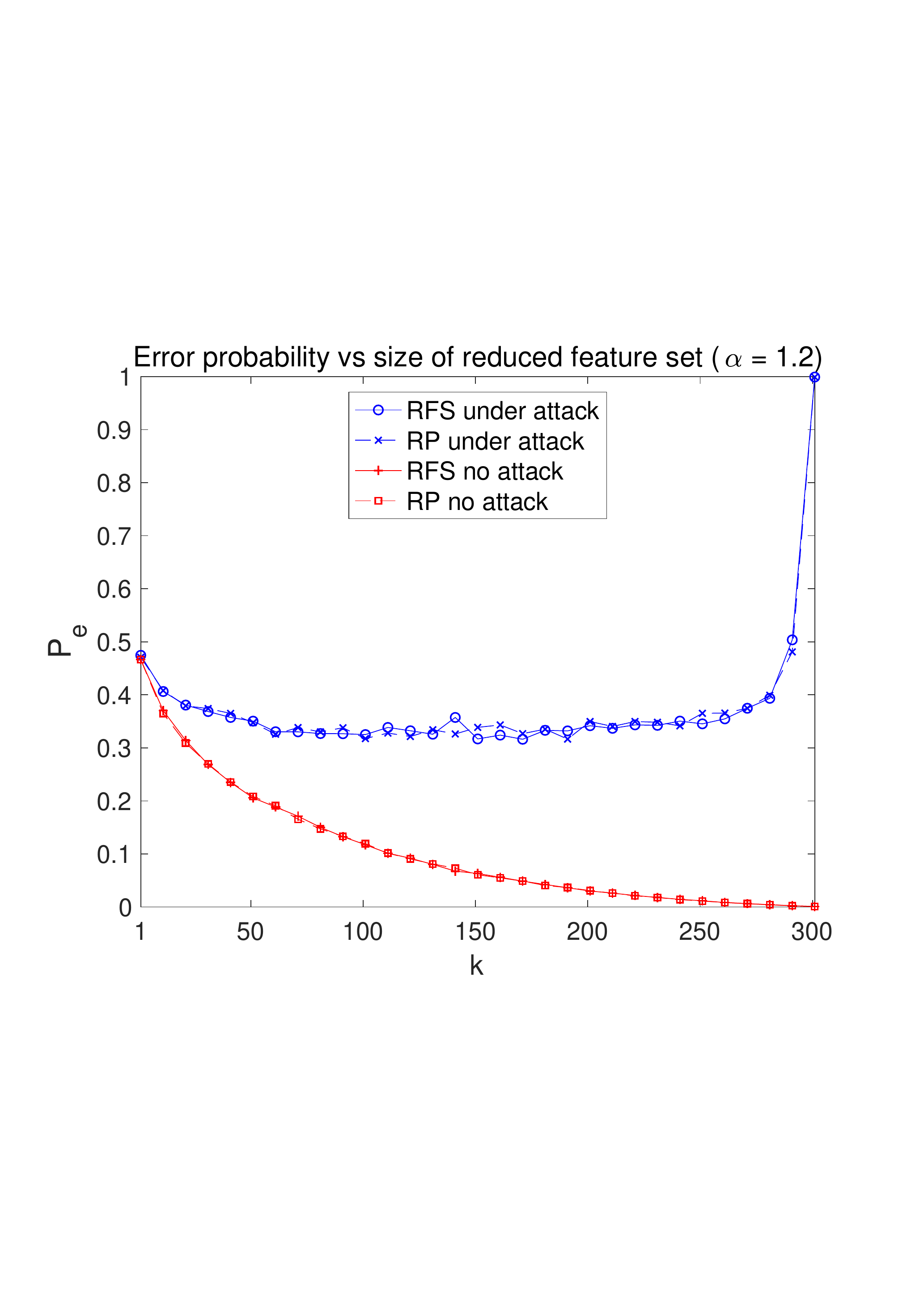}}
	\subfigure[]{
		\includegraphics[width=0.48\columnwidth, trim={30 180 50 180},clip]{./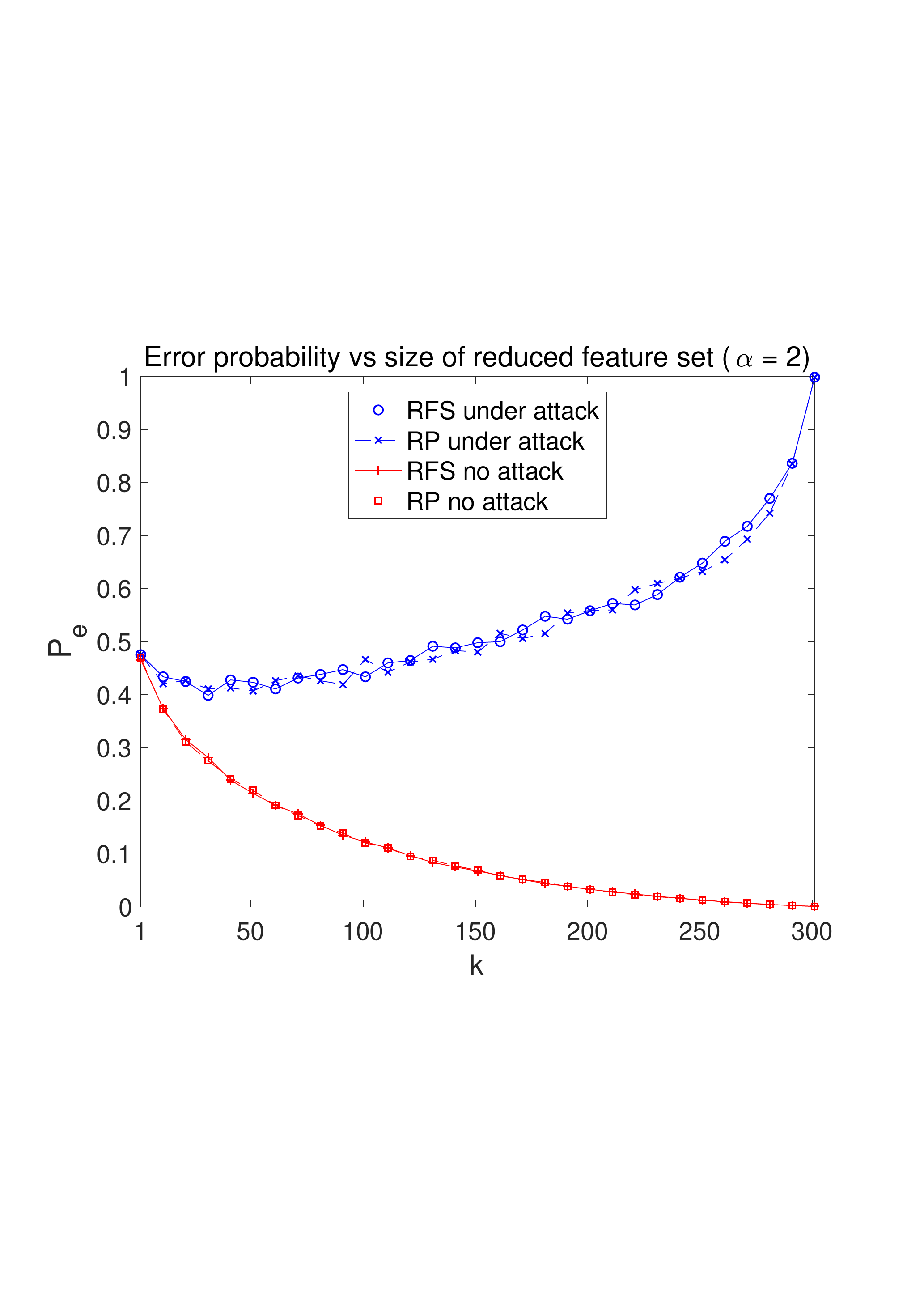}}
			\subfigure[]{
		\includegraphics[width=0.48\columnwidth, trim={40 180 50 180},clip]{./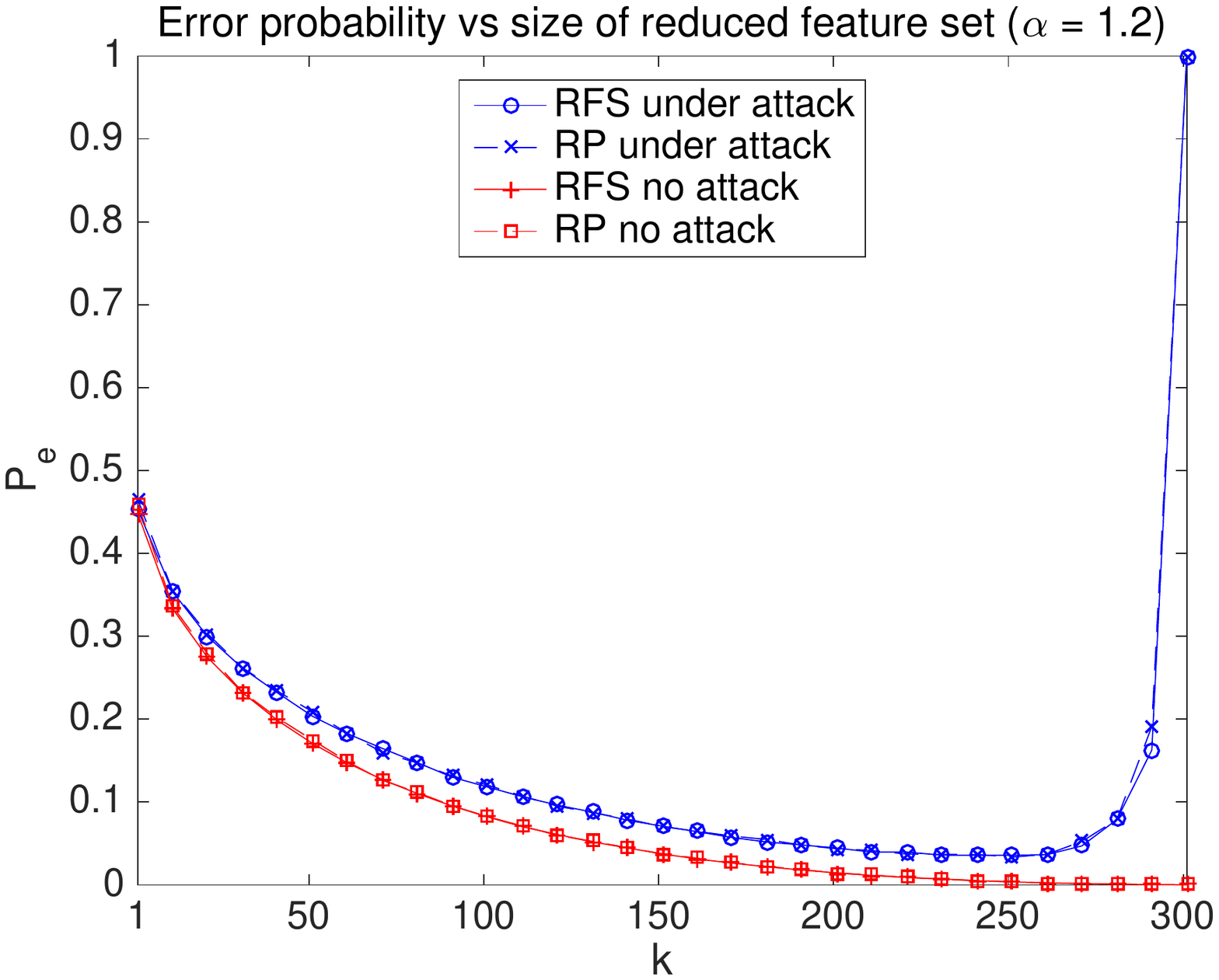}}
	\subfigure[]{
		\includegraphics[width=0.48\columnwidth, trim={40 180 50 180},clip]{./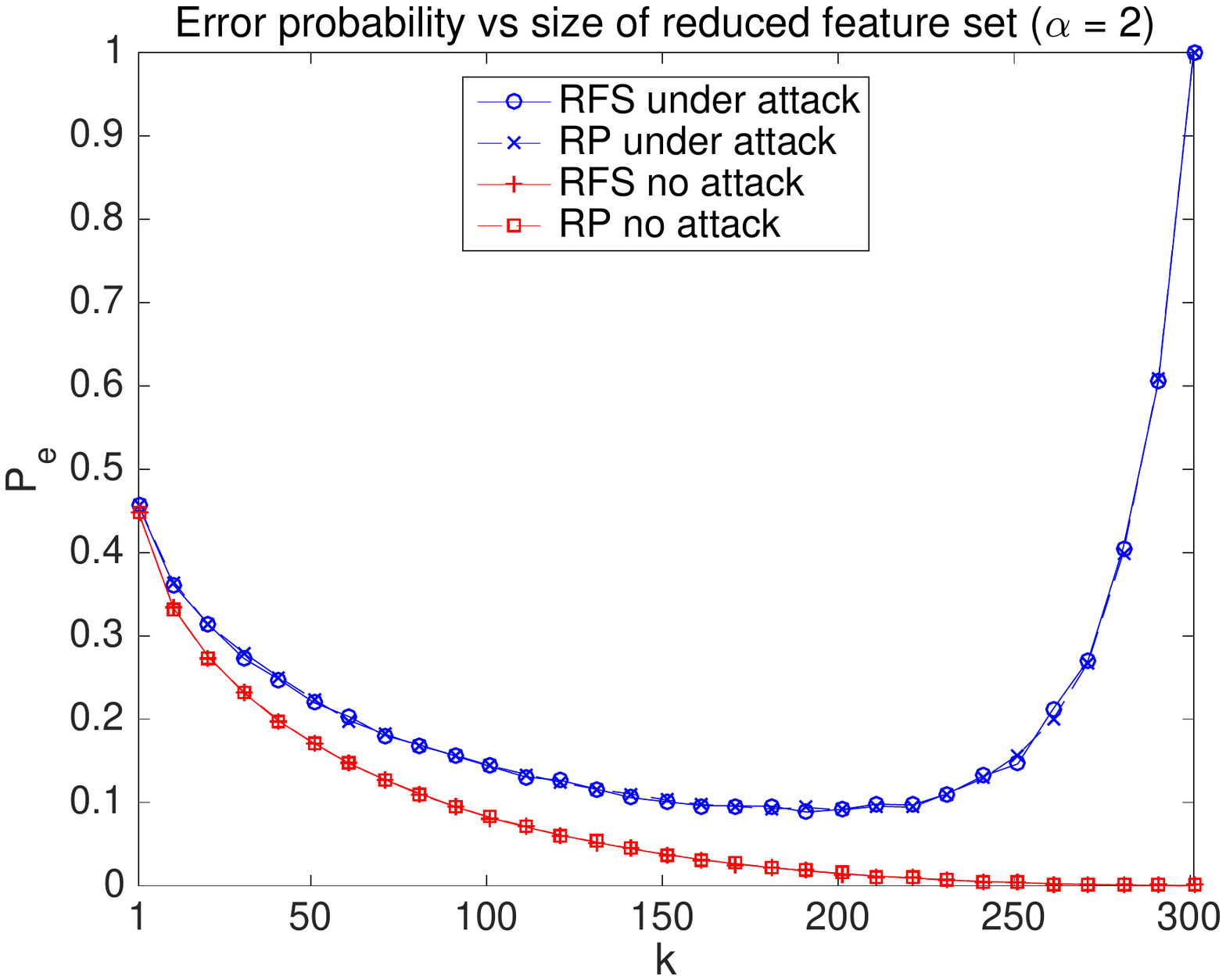}}	
	\caption{Missed detection error probability of the randomised feature detector with and without attacks, in the case of dependent features. The plots have been obtained by letting $n = 300$, $\alpha = 1.2$ (a)-(c), and $\alpha = 2$, (b)-(d). In (a) and (b) the features variance is not normalised while  (c) and (d) refer to the case of normalised features. In all cases the error probability has been obtained by averaging over 500 random choices of the host feature statistics and the matrix $S$. The average value of $z$ was: (a) $z = 4.9$, (b) $z =  5$, (c)  $z = 4.8$, (d) $z = 4.9$.}
\label{fig:dep} 
\end{figure}

We now consider the more general case of dependent features. To do so, we set the statistics of the host features as follows:
i) constant mean vector ${\bf u}$ (we also run some simulations with a randomly generated mean vector obtaining very similar results), ii) random covariance matrix
constructed by first generating a diagonal matrix with uniformly distributed random diagonal entries, and then randomly rotating the diagonal matrix so to obtain dependent
features. We observe that in this way the features have different variances, however, especially after the random rotation, the difference
among the variances is not big. This agrees with a practical setup in which the detector relies on normalised features. To force a different variance among features, we considered an additional setup in which the variance of the features is scaled after the random rotation of the covariance matrix. We did so by randomly generating a vector of scaling factors uniformly distributed between 0 and 1 and applying the square root of the scaling factors to the features. We used the square root of the random scaling factors to limit the scale differences among the features. In the following we refer to the first case as dependent normalised features and to the second as dependent features.

Alike in the i.i.d. case, we let $n= 300$ and averaged the results over 500 repetitions, each
time by randomly generating a new covariance matrix and a new matrix $S$. Even in this case, we considered two different values of $\alpha$, namely: $\alpha = 1.2$ and $\alpha = 2$. Fig.
\ref{fig:dep} reports the results that we have obtained. Due to the randomness involved in the generation of the feature statistics, we could not control the exact value of $z$,
the values resulting from each experiment are reported in the caption of the figure. As in Fig. \ref{fig:iid}, no particular difference is observed between the RFS and RP cases,
however, the overall behaviour of the detector is completely different with respect to the i.i.d. case. The error probability under attack now decreases more rapidly when the number of features used by the detector is reduced, thus indicating a high security level. After a certain point, however, the error probability increases again approaching 0.5
when $k$ tends to 1. Such a behaviour can be interpreted as a loss of robustness rather than a loss of security. In fact, the error probability in the absence of attacks exhibits
a similar increase when $k$ decreases, indicating that the detector is not able to distinguish between $H_0$ and $H_1$ by relying on few features only. Of course, such a
problem has an impact also on the error probability in the presence of attacks. Overall, the reduced detector performs better in the case of dependent features, however, higher values of $k$ must be used with respect to the i.i.d. case.
We also observe that a significantly higher security level is obtained for the case of normalised features (cases (c) and (d) in Fig. \ref{fig:dep}). In addition, increasing the value of $\alpha$ does not have a great impact on the success rate of the attack, especially in the case of normalized features.

In order to explain the plots reported in Fig. \ref{fig:dep}, noticeably their difference with respect to Fig. \ref{fig:iid}, we need to analyse in more detail the reason behind the security improvement achieved through random feature selection. We will do so by referring to the RFS case, however the same considerations can be applied to the RP case. We start by observing that the optimal attack given in the rightmost part of equation \eqref{eq.optattack} can be decomposed in two parts: a scaling factor $a$ and a direction ${\bf e}_{att}$:
\begin{equation}
\label{eq.attack_decomp}
\begin{aligned}
& a = \alpha \frac{\left( {{{\Sigma}^{ - 1}}{\bf u} } \right)^T {\bf v}}{\| \Sigma^{-1} {\bf u} \|}\\
& {\bf e}_{att} = \frac{\Sigma^{-1} {\bf u}}{\| \Sigma^{-1} {\bf u} \|}.
\end{aligned}
\end{equation}

The direction ${\bf e}_{att}$ ensures the optimality of the attack, since it is chosen so to be orthogonal to the boundary of the decision regions. The scaling factor $a$ is chosen  in such a way to move the attacked feature vector exactly on the boundary of the decision regions ($\alpha =1$) or inside the target region ($\alpha > 1$). With the randomized feature detector, the attack vector is projected onto a subspace with lower dimensionality and both the scaling factor and the direction of the projected attack vector are no more optimal. The non-optimality of the scaling factor ({\em scale mismatch}) implies that sometimes the magnitude of the projected attack vector is too small thus failing to induce a decision error. Such a probability is augmented by the non-optimality of the direction of the projected attack vector, since there is no guarantee that such a direction is orthogonal to the decision boundary of the RFS detector. In order to understand when and to which extent the {\em angle mismatch} between the optimal attack direction and the direction of the projected attack vector affects the security of the RSF detector, let us consider the expression of such directions. The direction orthogonal to the decision boundary of the RFS detector is:
\begin{equation}
\label{eq.orth_RFS}
    {\bf e}_{RFS} = \frac{\Sigma_r^{-1} {\bf u}_r}{\| \Sigma_r^{-1} {\bf u}_r \|},
\end{equation}
while the projection of ${\bf e}_{att}$ on the reduced feature space is:
\begin{equation}
\label{eq.eatt_r}
    {\bf e}_{att,r} = \frac{S \Sigma^{-1} {\bf u}}{\| S \Sigma^{-1} {\bf u} \|}.
\end{equation}
In the case of i.i.d. feature, it is easy to see that the above directions coincide given that $\Sigma_r^{-1} = I_{k \times k}$, $\Sigma^{-1} = I_{n \times n}$ and $S S^{T} = I_{k \times k}$.
This is not the case with dependent features. To understand the importance of the angle mismatch in this case, we randomly generated 1000 pairs of covariance matrixes and random selection matrices $S$. Then we plot the histogram of the angle mismatch. The results we got are reported in Fig. \ref{fig.angmis}, for both the cases of non-normalized and normalized features. The figure shows the results for $k = 50, 150$ and $250$ (similar results hold for other values of $k$). Upon inspection of the figure, we can see that the angle mismatch is much larger in the case of normalized features, thus explaining the higher security performance of the randomized detector in this case. For small values of $k$ the mismatch is so large that the attack vector is almost orthogonal to the optimal direction, thus nullifying the effect of the attack. Such an observation also explains why in this case increasing $\alpha$ does not increase the effectiveness of the attack, in fact, the increased value of the magnitude of the attack is {\em wasted} since the direction of the attack is a wrong one.

\begin{figure}[t]
	\centering	
	\subfigure[]{
		\includegraphics[width=0.31\columnwidth, trim={30 180 50 180},clip]{./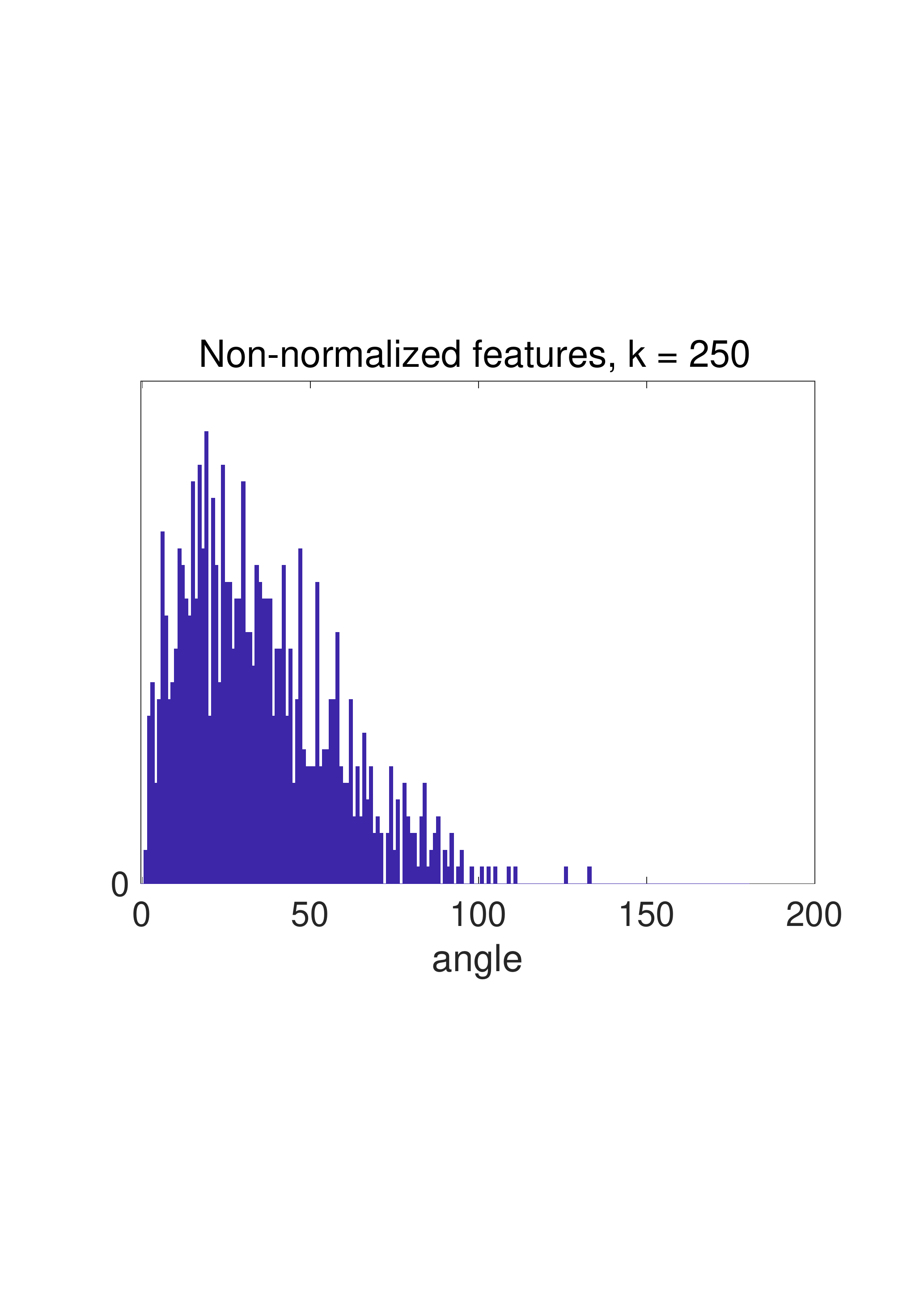}}
	\subfigure[]{
		\includegraphics[width=0.31\columnwidth, trim={30 180 50 180},clip]{./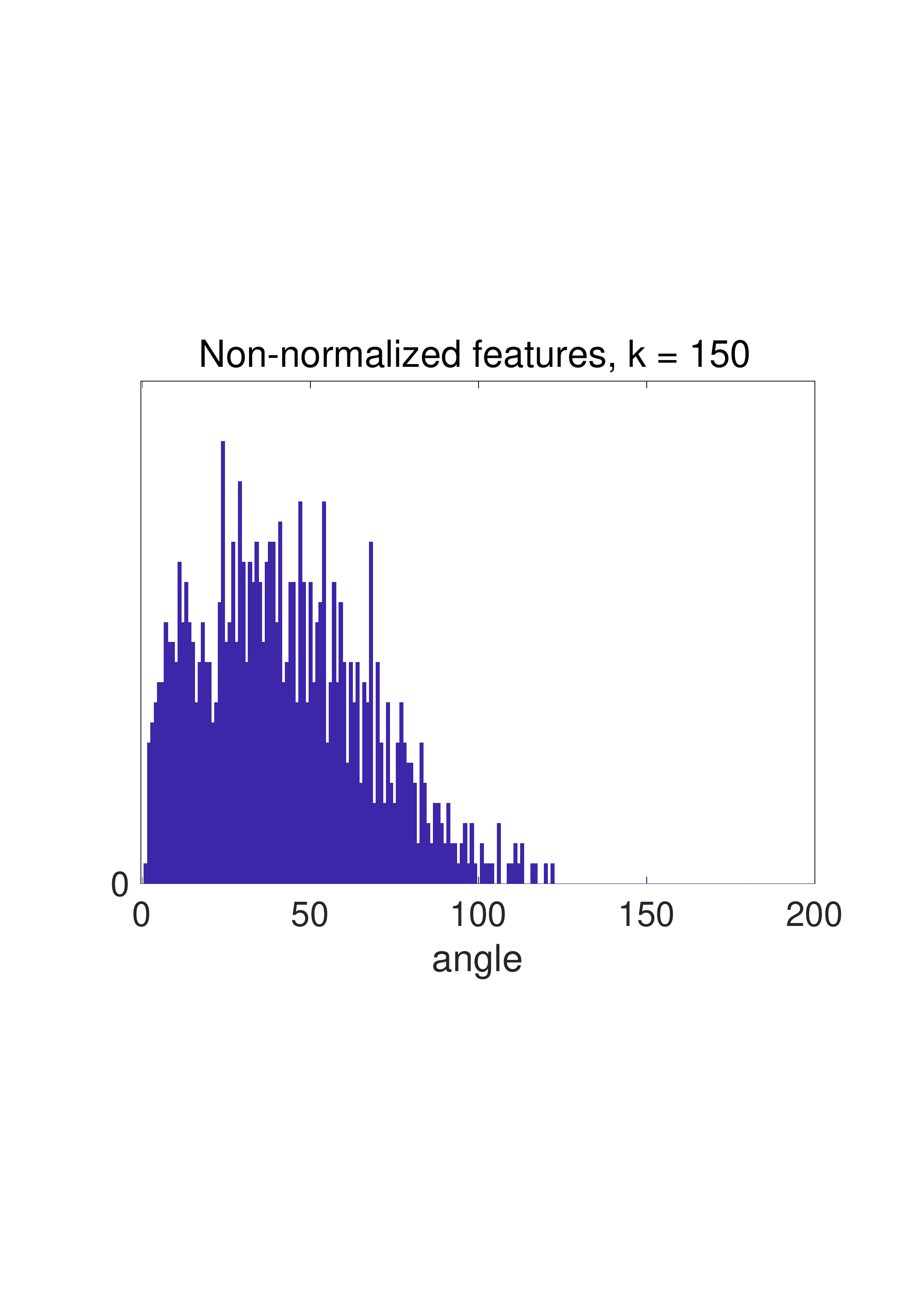}}
			\subfigure[]{
		\includegraphics[width=0.31\columnwidth, trim={40 180 50 180},clip]{./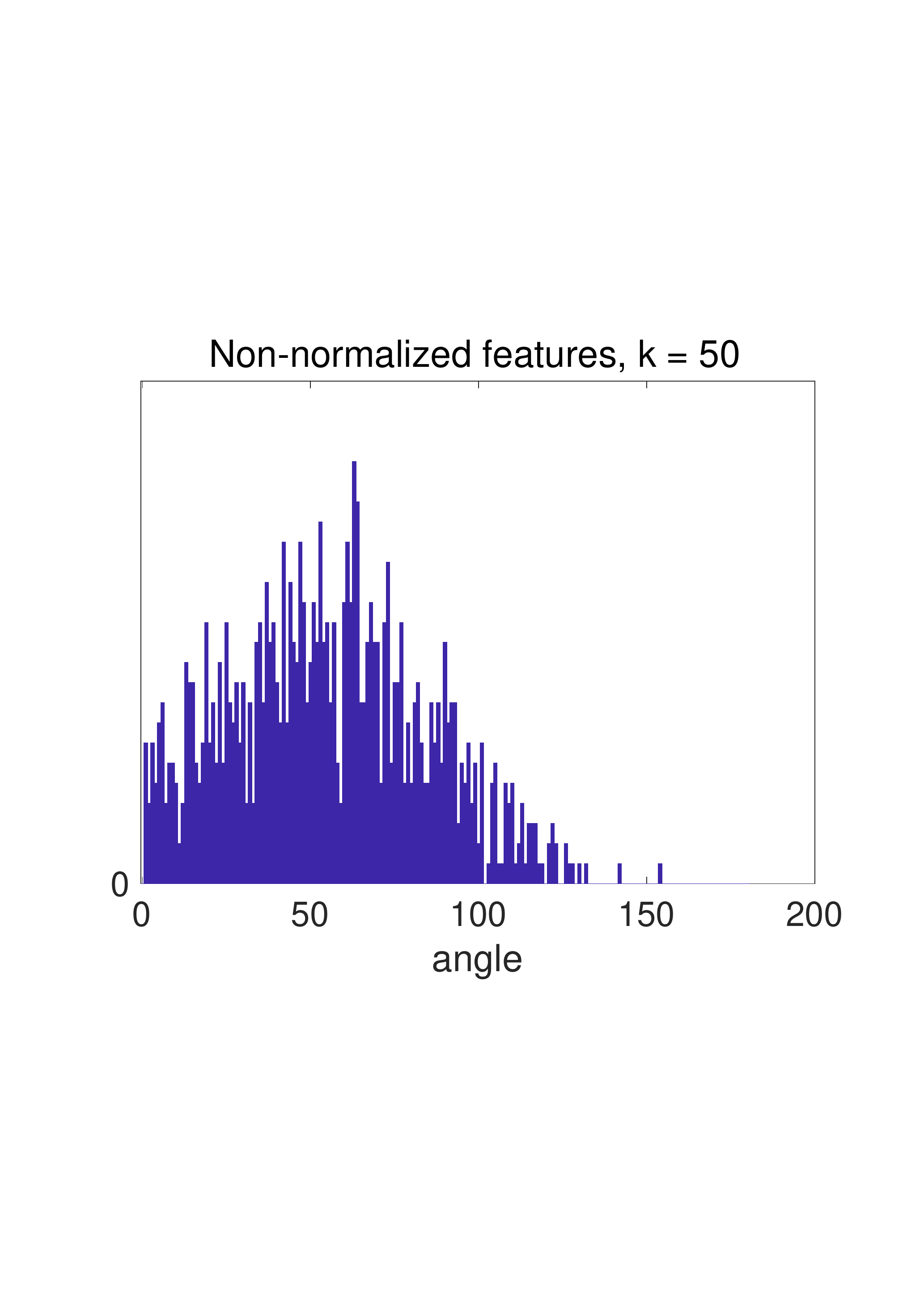}}
	\subfigure[]{
		\includegraphics[width=0.31\columnwidth, trim={40 180 50 180},clip]{./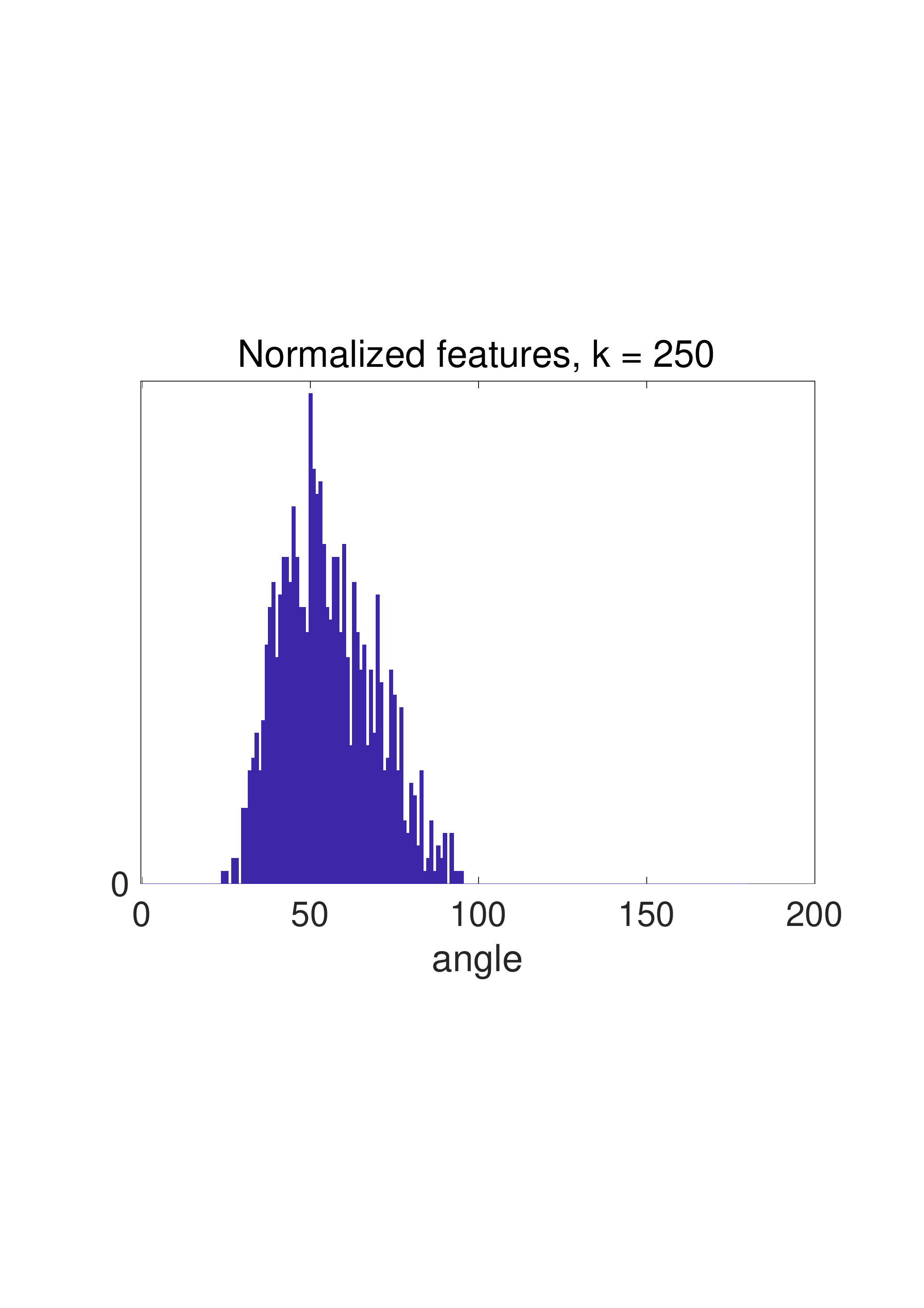}}
					\subfigure[]{
		\includegraphics[width=0.31\columnwidth, trim={40 180 50 180},clip]{./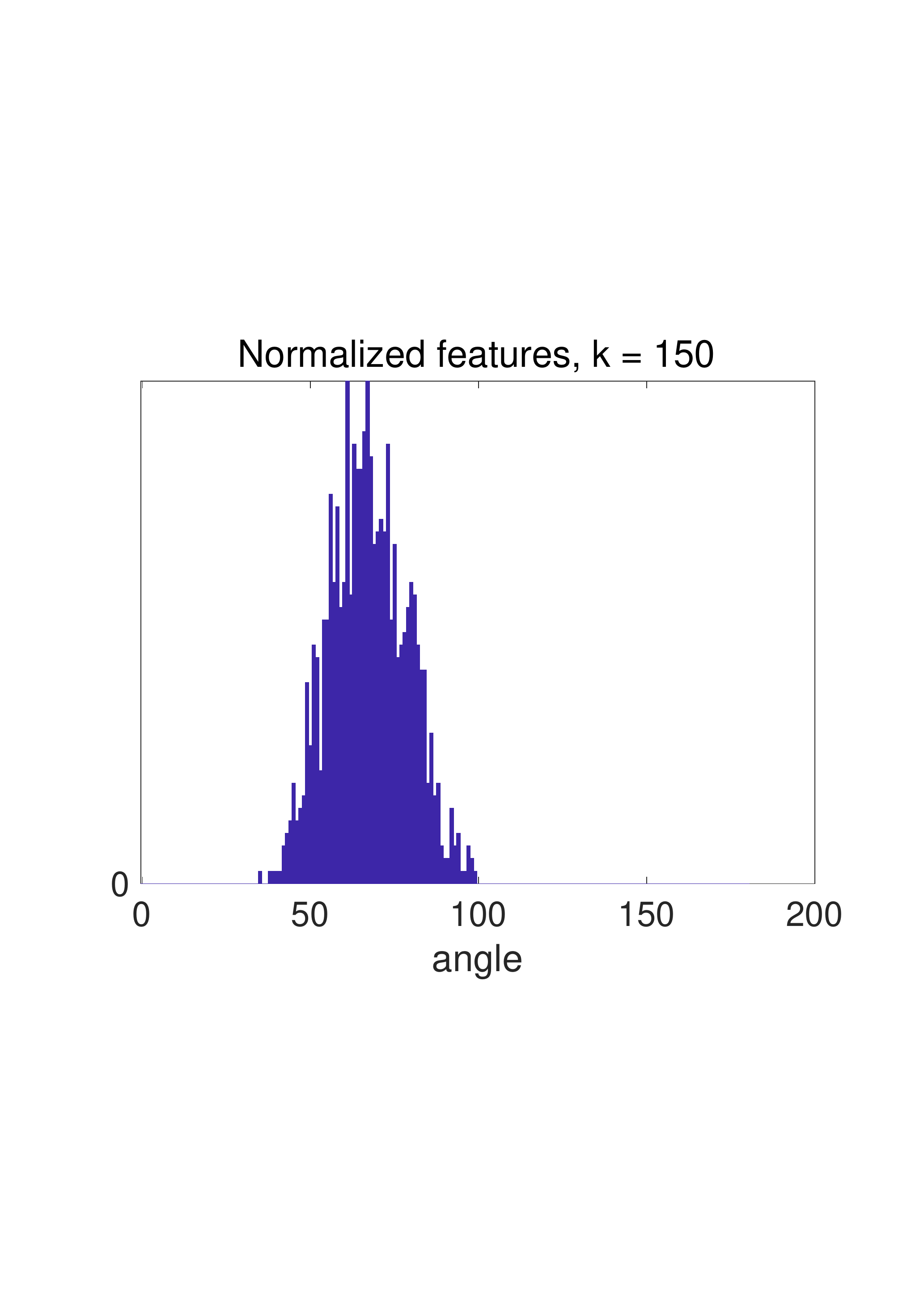}}
	\subfigure[]{
		\includegraphics[width=0.31\columnwidth, trim={40 180 50 180},clip]{./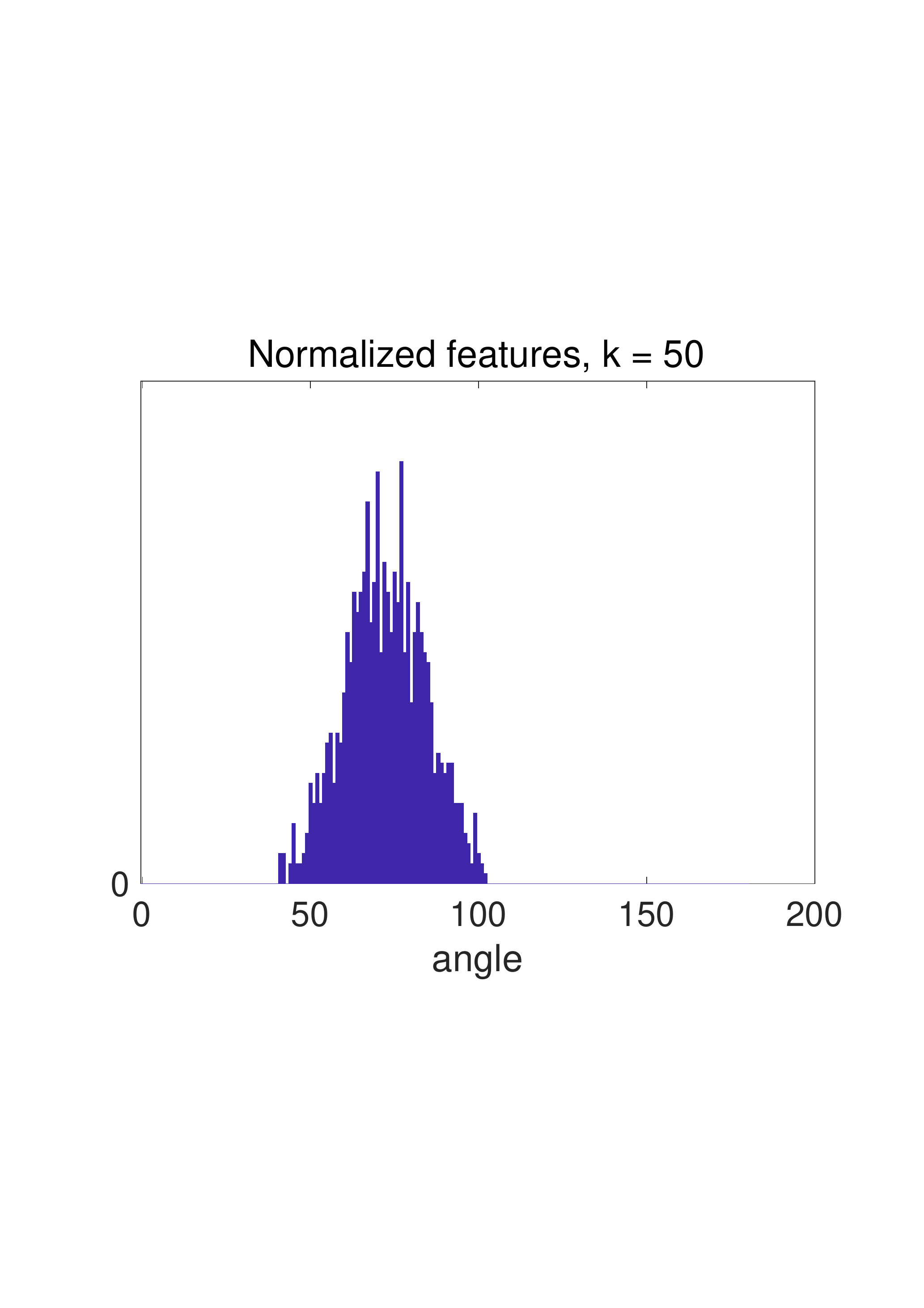}}	
	\caption{Angle mismatch between the projection of the attack vector on the feature space of the RFS detector and the optimal attack direction for the RFS detector. Plots (a)-(c) refer to the case of non-normalized features, while plots (d)-(e) correspond to the case of normalized features.}
\label{fig.angmis} 
\end{figure}

Given that the particular form of the matrix $S$ does not have a significant impact on the performance of the reduced detector, in the following we focus on the RFS case only. An advantage of such an approach is its lower computational complexity. In the RP cases, in fact, the detector must compute the entire feature vector and then choose only $k$ linear combinations. In the RFS case, instead, the detector can compute only the features that it intends to use, avoiding to calculate the non-selected features; this may allow a significant reduction of the complexity, especially for large values of $n$ and small values of $k$.

\section{Application to image manipulation detection}
\label{sec.exa}

The theoretical analysis given in the previous section suggests that a detector based on a randomised subset of features provides a better security with respect to a full-feature detector. The applicability of such an idea to real world applications, however, requires great care, since the assumptions behind the theoretical analysis are ideal ones and are rarely met in practice.  Since the goal of this paper is to improve the security of image forensic techniques against counter-forensic attacks, in this section, we introduce an SVM-based detector based on random feature selection and apply it to two particular image forensic problems, namely the detection of adaptive histogram equalization and the detection of median filtering. The full feature space consists of a subset of SPAM features \cite{Pevny10}, however the SVM detector is trained by relying
only on a random subset of the full feature set. As we will see, the loss of performance of the RFS SVM detector in the absence of attacks is very limited, even for rather small values of $k$.

We also introduce two attacks aiming at deceiving the SVM detector. Both attacks are based on gradient descent, the first one works in the feature domain, while the second operates directly in the pixel domain. The attacks are very powerful since they were able to prevent a correct detection in all the test images. In particular, the attack operating in the pixel domain is a very practical one since it does not require to map back the attack from the feature to the pixel-domain and can prevent a correct detection by introducing a limited distortion in the attacked image. In Section \ref{sec.exp}, we will use these attacks to demonstrate the improved security ensured by the RFS detector.

\subsection{RFS SVM-based detection of image manipulations}
\label{subsec.SVM}

Residual-based features, originally devised for steganalysis applications \cite{Pevny10,Frid12rich}, have been used with success in many image forensic applications, including forgery detection \cite{Cozzolino2015Image}, detection of pixel-domain enhancement, spatial filtering, resampling and lossy compression \cite{Li2016Identification}.
In particular, in this paper, we consider the SPAM feature set \cite{Pevny10}. Since we carried out all our tests on grey-level images, we assume that these features are computed directly on grey-level pixel values, or on the luminance component derived from the RGB colour bands. Feature computation consists of three steps. In the first step, residual values are computed; specifically, the difference 2-D arrays are evaluated along horizontal, vertical and diagonal directions. In the second step, the residual values are truncated so that their  maximum absolute value is equal to $T $. Finally, the co-occurrence matrices  are computed. Depending on the value of $T$ and the order of the co-occurences considered in the computation, different sets of features with different sizes are obtained.
We use second-order SPAM features with $T=3$, for a dimensionality of the feature space of 686.

Based on the SPAM features, we built an SVM detector aiming at revealing Adaptive Histogram Equalization (AHE) and Median Filtering (MF).  With regard to AHE, we considered the contrast-limited algorithm (CLAHE) implemented by Matlab function $ adapthisteq $ with $ cliplimit = 0.02$. Some sample images manipulated in this way are shown in the second column in Fig. \ref{fig:compare}.
With regard to MF, we considered window sizes $3\times3, 5\times5$ and $7\times7,$  (MF3, MF5, MF7).

\begin{figure*}[htbp]
	\centering
	\subfigure[]{
		\label{fig:compare:d} 
		\includegraphics[width=2in]{./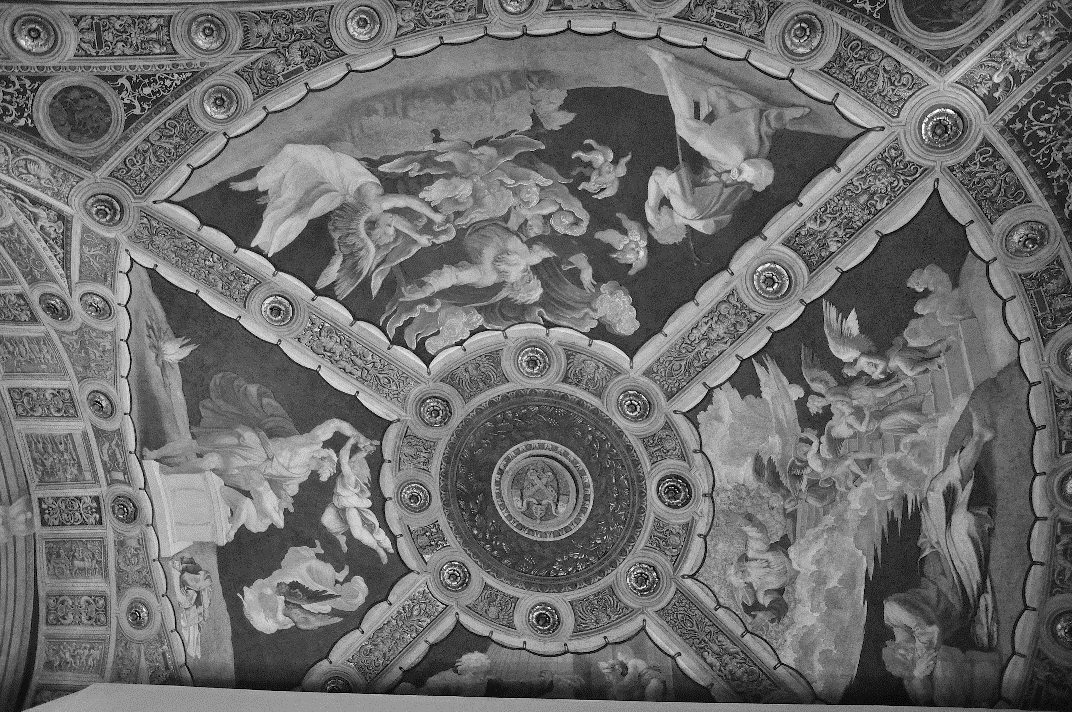}}
	\subfigure[]{
		\label{fig:compare:e} 
		\includegraphics[width=2in]{./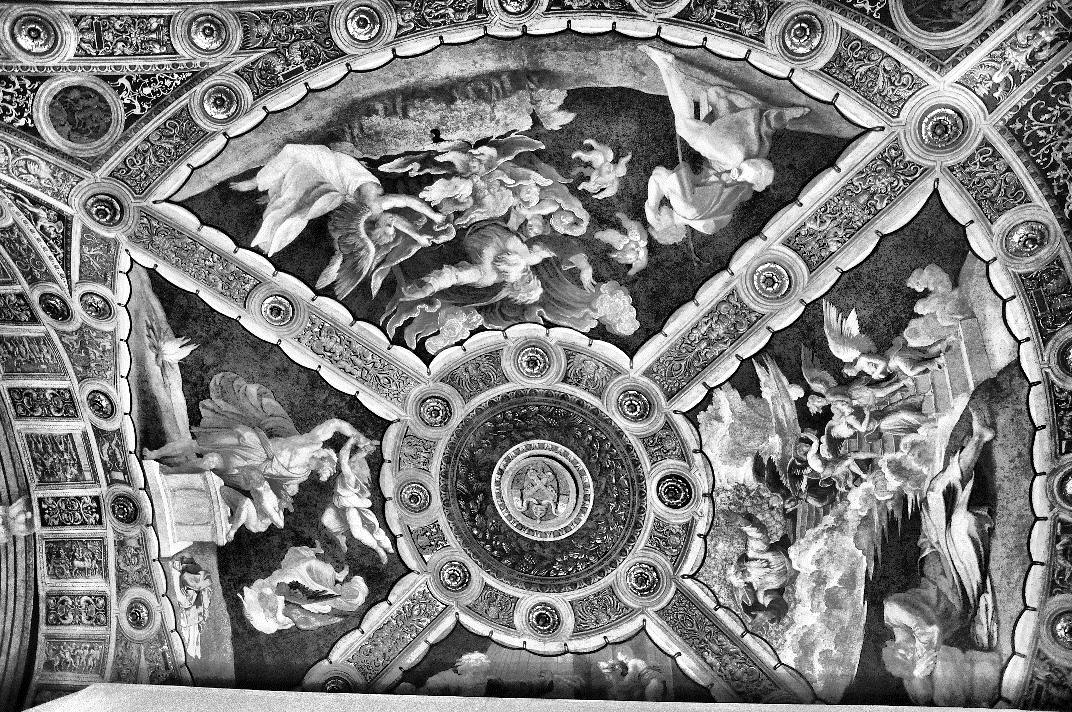}}
	\subfigure[]{
		\label{fig:compare:f} 
		\includegraphics[width=2in]{./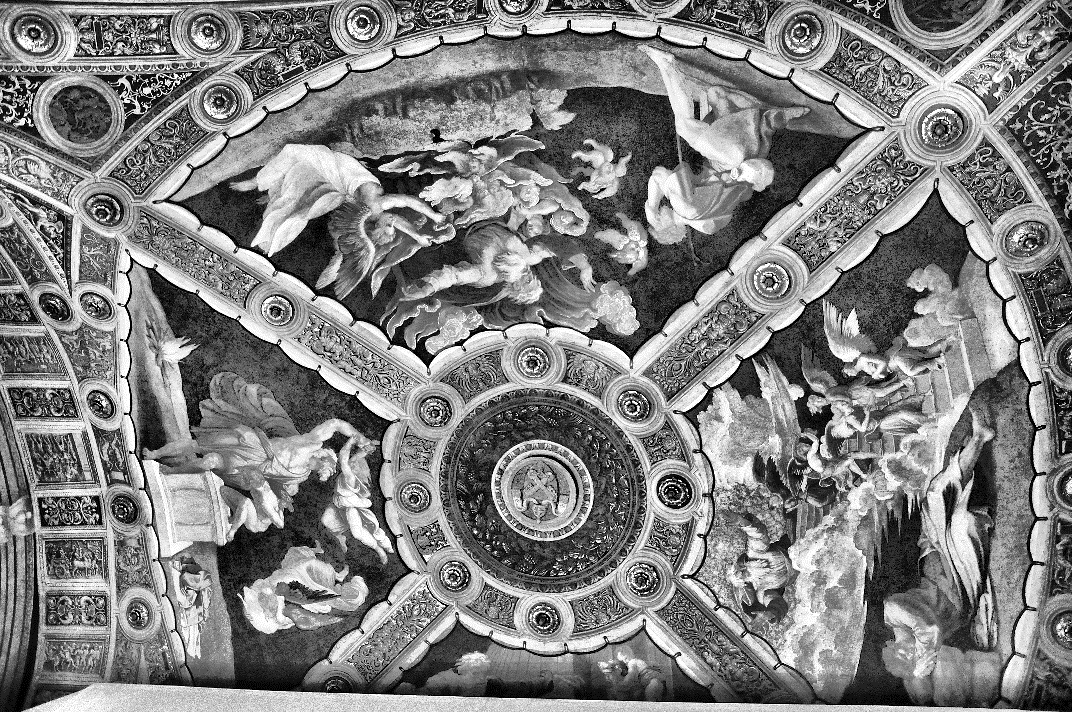}}
	\subfigure[]{
		\label{fig:compare:g} 
		\includegraphics[width=2in]{./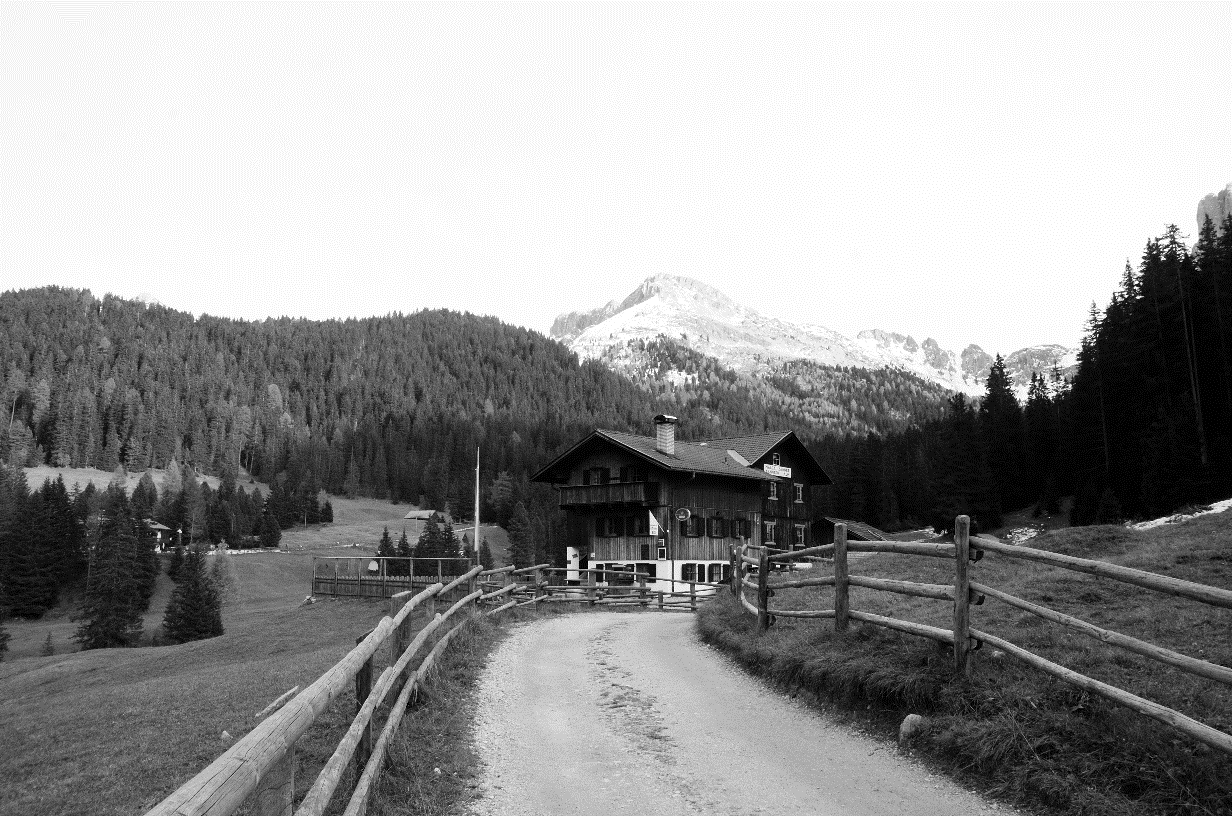}}
	\subfigure[]{
		\label{fig:compare:h} 
		\includegraphics[width=2in]{./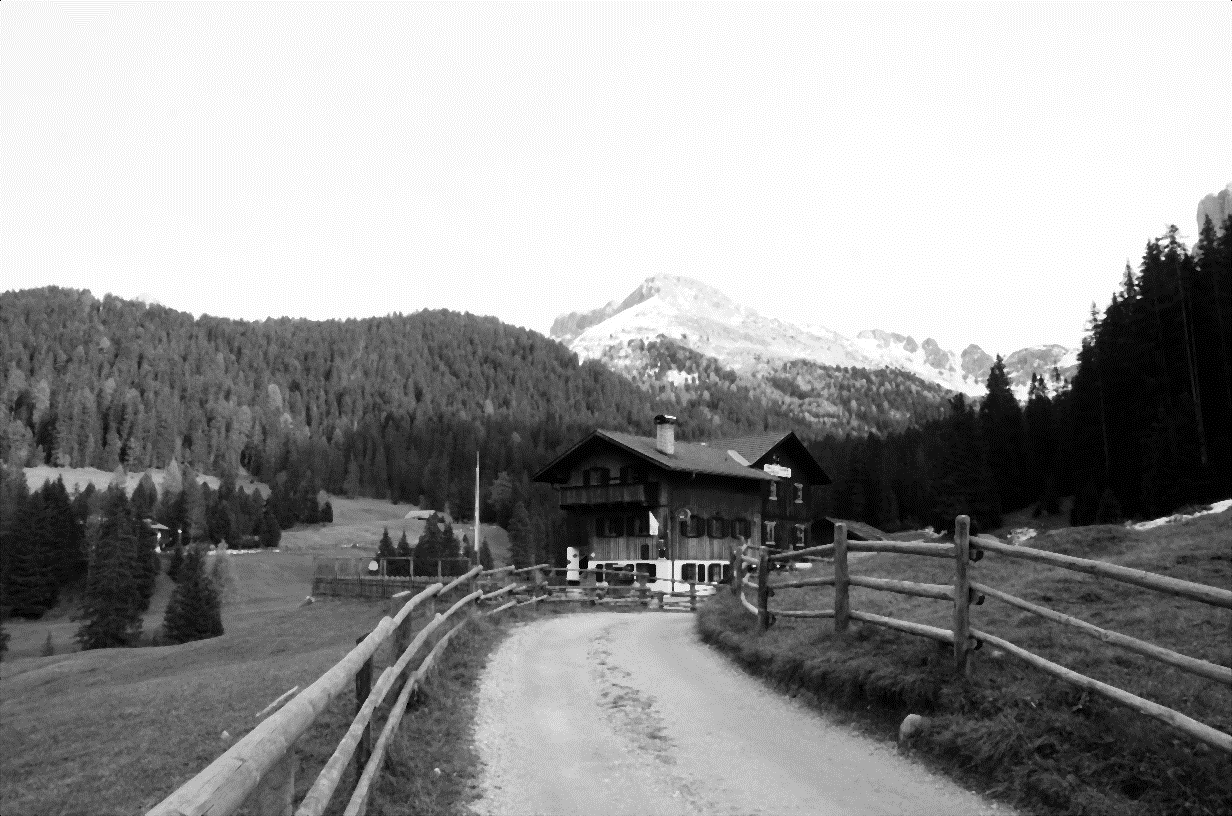}}
	\subfigure[]{
		\label{fig:compare:i} 
		\includegraphics[width=2in]{./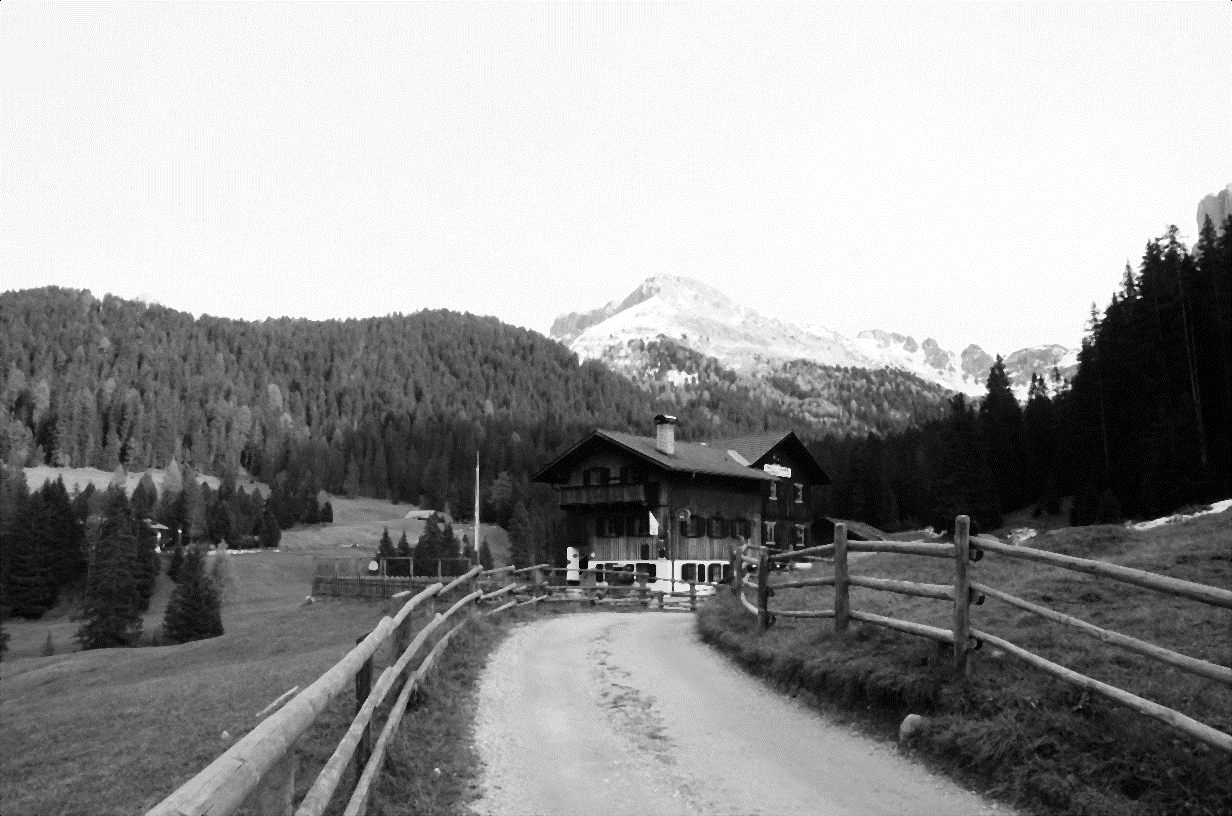}}
	\caption{Examples of manipulated and attacked images: (a) and (d) show the original images;  (b) the image manipulated by AHE; (e) the image manipulated by MF3; (c), and (f) the images after a pixel-domain attack with $\varepsilon = 0.5$.}
	\label{fig:compare} 
\end{figure*}

To train and test the detectors, we used a set of 2000 images from the RAISE-2k dataset \cite{RAISE}. Specifically, we used 1400 images for training and 600 images for testing. To speed up the experiments, the images were downsampled by factor 4 and converted to grayscale. The SVM models are built by using the tools provided by the LibSVM library \cite{chang2011libsvm}. The RBF kernel is used for all the SVMs. The results of the tests are shown in the first row of Table \ref{Tab.errorProb}. All the four detectors got a 100\% accuracy on the test data, thus confirming the excellent capabilities of the SVM trained with the SPAM features to detect global image manipulations like median filtering and histogram equalization.

To double-check that the SVM model does not overfit to the training data, Table \ref{Tab_NumSV} shows the number of support vectors for the SVM detector trained on the full feature set. These numbers are very low compared to the number of examples considered for training, which is 2800 (1400 per class). More specifically, the ratio between the number of support vectors and training examples is around 0.04 - 0.05 (which can be taken as a good indication of the upper bound of the generalisation error \cite{Vapnik}).

\begin{table}[htbp]
	\centering
	\caption{Numbers of support vectors of full-feature SVM model}
	\label{Tab_NumSV}
	\begin{tabular}{ccccc}
		& \textbf{AHE}& \textbf{MF3} & \textbf{MF5} & \textbf{MF7}\\
		\hline
		\textbf{Num. of support vectors} & $62$ & $59$ & $71$  & $69$ \\
		\hline
	\end{tabular}
\end{table}

We also carried out some tests with a linear SVM, however we found that the linear model does not discriminate well the two classes when the number of features decreases to less than 100-200, thus preventing the application of RFS with small values of $k$.

\begin{table}[htbp]
	\centering
	\caption{Error probability of SPAM-based SVM detectors in the absence and presence of attacks ($\varepsilon = 0.5$).}
	\label{Tab.errorProb}
	\begin{tabular}{ccccc}
		& \textbf{AHE}& \textbf{MF3}& \textbf{MF5}& \textbf{MF7}\\
		\hline
		\textbf{Manipulated} & 0\% & 0\%  & 0\%   & 0\%\\
		\textbf{Attack in feature domain} &  100\% & 100\% & 100\% & 100\%\\
		\textbf{Attack in pixel domain} &  100\% & 100\% & 100\%  & 100\%\\		
		\hline
	\end{tabular}
\end{table}

\subsection{Attack in the feature domain}
\label{subsec.att1}

In this section, we describe the feature domain attack we have implemented against the RFS SVM detectors.

The attack in the feature domain has been built by following the system described in \cite{Biggio13}. Specifically, given a feature vector $ {\bf v} $ and a discriminant function $ g(\cdot)$, we assume that the detector decides that ${\bf v} $ belongs to a manipulated image if $g({\bf v} )>0 $, and to an original image otherwise. In this setup, the optimally attacked vector ${\bf v}^*$ is determined by solving the following minimization problem:
\begin{equation}
\label{eq_attFeatDom}
{\bf v}^\star = \mathop{\arg\min}_{{\bf v}':g({\bf v}') \le - \nu} d({\bf v},{\bf v}'),
\end{equation}
where $ d(\cdot,\cdot) $ is a suitable distortion measure and $ \nu $ is a safe margin, which, similarly to the parameter $\alpha$ in Section \ref{sec.theory}, permits to move the attacked vector more or less inside the acceptance region. For an SVM detector, the discriminant function can be written as:
\begin{equation}
g({\bf v}) = \sum\nolimits_i {{\alpha _i}{y_i}k({\bf v},{{\bf v}_i})}  + b,
\end{equation}
where $ \alpha_i $ and $ y_i $ are, respectively, the support value and the label of the $ i-$th support vector $ {\bf v}_i $, and where $k()$ is the kernel function. In our implementation, the minimization problem is solved by using a gradient descent algorithm, where the gradient at each iteration is computed as:
\begin{equation}
\nabla g({\bf v}) = \sum\nolimits_i {{\alpha _i}{y_i}\nabla k({\bf v},{{\bf v}_i})}.
\end{equation}

\begin{table*}[t!]
	\centering
	\caption{Discrimination functions and corresponding gradient for different SVM kernels}
	\label{Tab_grad}
	\begin{tabular}{ccc}
		&\textbf{$g({\bf v})$} & \textbf{$\nabla g({\bf v})$}\\
		\hline
		\textbf{Linear kernel} & $k\left( {{\bf v},{{\bf v}_i}} \right) = {{\bf v}^T}{{\bf v}_i}$ & $\mathop \sum \limits_i {a_i}{y_i}{{\bf v}_i}$ \\
		\textbf{Polynomial kernel} & $k\left( {{\bf v},{{\bf v}_i}} \right) = {\left( {{{\bf v}^T}{{\bf v}_i}}+c \right)^p}$ &  $\mathop \sum \limits_i {a_i}{y_i}p{\left( {{{\bf v}^T}{{\bf v}_i} + c} \right)^{p - 1}}{{\bf v}_i}$\\
		\textbf{RBF kernel}& $k\left( {{\bf v},{{\bf v}_i}} \right) = {\rm{exp}}\left( { - \gamma {{\left| {\left| {{\bf v} - {{\bf v}_i}} \right|} \right|}^2}} \right)$ & $ - \mathop \sum \limits_i 2\gamma {a_i}{y_i}\exp \left( { - \gamma {{\left| {\left| {{\bf v} - {{\bf v}_i}} \right|} \right|}^2}} \right)\left( {{\bf v} - {{\bf v}_i}} \right) $\\
		\hline
	\end{tabular}
\end{table*}

The discrimination function and the corresponding gradient for different kernels are reported in Table \ref{Tab_grad}.

As a matter of fact, in our implementation we used the probabilistic output of the SVM rather than $g(\bf v)$. The probabilistic output is built by mapping $g({\bf v})$ into the [0, 1] range and by letting the value $g({\bf v}) = 0$ correspond to a probabilistic output equal to 0.5. By indicating the probabilistic output of the SVM with $p({\bf v})$, the minimization problem in \eqref{eq_attFeatDom} can be rewritten as:
\begin{equation}
\label{eq_attFeatDom_prob}
{\bf v}^\star = \mathop{\arg\min}_{{\bf v}':p({\bf v}') \le  \varepsilon} d({\bf v},{\bf v}'),
\end{equation}
where $\varepsilon = 0.5$ (equivalent to $\nu = 0$) corresponds to an attacked feature vector lying on the decision boundary and values of $\varepsilon < 0.5$ introduce a safe margin bringing the attacked vector deeper inside the wrong detection region.

We attacked the full-feature SVM detectors for AHE and MF by applying the gradient descent attack in the feature domain; even with $\varepsilon = 0.5$, the attack was able to deceive all the detectors, as reported in the second row of Table \ref{Tab.errorProb}.  The distortion introduced by the attack for $\varepsilon= 0.5 $, $ 0.3 $ and $ 0.1 $ is given in Table. \ref{Tab_ExpSNR}.  As expected the distortion increases for smaller values of $\varepsilon$. We observe that the distortion values reported in the table do not give an immediate indication of the distortion of the attacked image, due to the difficulties of mapping back the attacked feature vector into the pixel domain.


\begin{table}[htbp]
	\centering
	\caption{Average SNR (\textnormal {dB}) computed on the features of 600 attacked images. SNR is defined as the ratio between the energy of the feature vector before the attack and the energy of the distortion. }
	\label{Tab_ExpSNR}
	\begin{tabular}{cccc}
		&\textbf{$\varepsilon = 0.5$} & \textbf{$\varepsilon = 0.3$} & \textbf{$\varepsilon = 0.1$}\\
		\hline
		\textbf{AHE} & $26.01$ & $25.09$ & $23.88$ \\
		\textbf{MF3} & $18.59$ & $17.53$ & $16.1$ \\
		\textbf{MF5} & $19.19$ & $18.15$ & $16.76$ \\
		\textbf{MF7} & $19.15$ & $18.12$ & $16.72$ \\
		\hline
	\end{tabular}
\end{table}

\subsection{Attack in the pixel domain}
\label{subsec.att2}

In a realistic scenario, the attacker will carry out his attack in the pixel domain. Similarly to equation \eqref{eq_attFeatDom_prob}, the goal of the attack in the pixel domain is to solve the following optimisation problem:
\begin{equation}
\label{eq_attPixelDom_prob}
I^\star = \mathop{\arg\min}_{I':p(f(I')) \le  \varepsilon} d(I,I') .
\end{equation}
where $ f(\cdot) $ is the feature extraction function (i.e. $f(I) = \bf{v}$) and, as before, $p$ indicates the probabilistic output of the SVM. Due to the complicated form of $f(\cdot)$ and to the necessity of preserving the integer nature of pixel values, gradient descent can not be applied directly to solve \eqref{eq_attPixelDom_prob}. For this reason we implemented the attack as described in \cite{BarZhip17}, by setting to 20\% the percentage of pixels modified at each iteration of the attack (see \cite{BarZhip17} for more details).

The results of the attack on the same dataset used in Section \ref{subsec.SVM} are reported in the last row of Table \ref{Tab.errorProb}.
Only the results for $\varepsilon= 0.5 $ are shown for simplicity. The results show that the attack always succeeds for all the detectors. The average distortion introduced within the attacked images is reported in Table \ref{Tab_ExpPSNR}, while some examples of attacked images are given in the third column of Fig. \ref{fig:compare}.

\begin{table}[htbp]
	\centering
	\caption{Average PSNR (\textnormal {dB}) of the 600 attacked images}
	\label{Tab_ExpPSNR}
	\begin{tabular}{cccc}
		&\textbf{$\varepsilon = 0.5$} & \textbf{$\varepsilon = 0.3$} & \textbf{$\varepsilon = 0.1$}\\
		\hline
		\textbf{AHE} & $54.49$ & $53.09$ & $51.96$ \\
		\textbf{MF3} & $56.77$ & $56.43$ & $54.17$ \\
		\textbf{MF5} & $56.79$ & $55.51$ & $53.37$ \\
		\textbf{MF7} & $56.43$ & $54.73$ & $52.63$ \\
		\hline
	\end{tabular}
\end{table}

\section{Security of RFS-based manipulation detection against feature-domain and pixel-domain attacks}
\label{sec.exp}

In this section, we show the improved security provided by RFS by testing the adaptive histogram equalization and median filtering detectors introduced in Section \ref{subsec.SVM} against the attacks presented in Sections \ref{subsec.att1} and \ref{subsec.att2}.

\subsection{Experimental methodology}

In our experiments, we used the same setting described in Section \ref{subsec.SVM}. The SVM models were trained by using a randomly selected subset of SPAM features extracted from the training set. We considered two versions of the detectors. The first version relies on SPAM features as they are, with no normalisation. The second versions is built by normalising the SPAM features before feeding them to the SVM. Normalization is applied both during the training and testing phases, and is achieved by dividing each feature by its standard deviation, estimated on the training set. The two versions of the detector correspond to the two scenarios considered in Sect. \ref{subsec.numres} (see Fig. \ref{fig:dep}).
The processed images were attacked by using the two methods presented in Section \ref{subsec.att1} and \ref{subsec.att2}. In both cases, we assumed that the attacker has access to a version of the full-feature SVM detectors trained on the same dataset used by the analyst. The performance of the RFS SVM detectors were then evaluated on both the attacked and non-attacked images for different values of $k$. We repeated the experiments 100 times, each time using a different matrix $S$. Various stopping conditions, namely $\varepsilon = \{0.1, 0.3, 0.5\}$, were considered for both the attacks. For the dimension of the reduced feature set, we considered values of $k$ in the range $\{1,3,5,10,20,50,100,200,300,400,500, 600, 686\}$. The value of $\gamma $ (see Table \ref{Tab_grad}) was obtained by means of 5-fold cross-validation carried out on the training set.
To evaluate the performance in the absence of attacks, we considered both false alarms and missed detections, while for the security in the presence of attacks, we considered only the missed detection probability, since in our setup the goal of the attacker is to induce a missed detection event.

\subsection{Security vs robustness tradeoff: feature domain}

In this section, we describe the performance of the RFS detectors against the feature-domain attack described in Section \ref{subsec.att1}. As we already pointed out, this is an ideal situation for the attacker, since in real applications the attacker usually does not have access to the feature domain and the inverse-mapping  from the feature domain to the pixel domain is often a difficult task. As we will see, despite the setup is more favourable to the attacker than in the case of a pixel domain attack, the RFS detector exhibits a good  security.
\begin{figure*}[htb!]
	\centering	
	\subfigure[]{
		\includegraphics[width=2.2in,height=2.2in]{./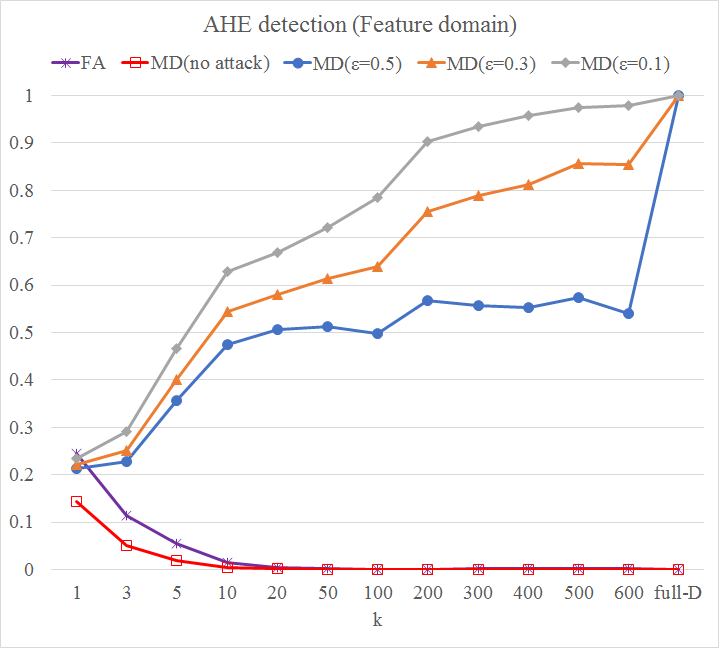}}
	\subfigure[]{
		\includegraphics[width=2.2in,height=2.2in]{./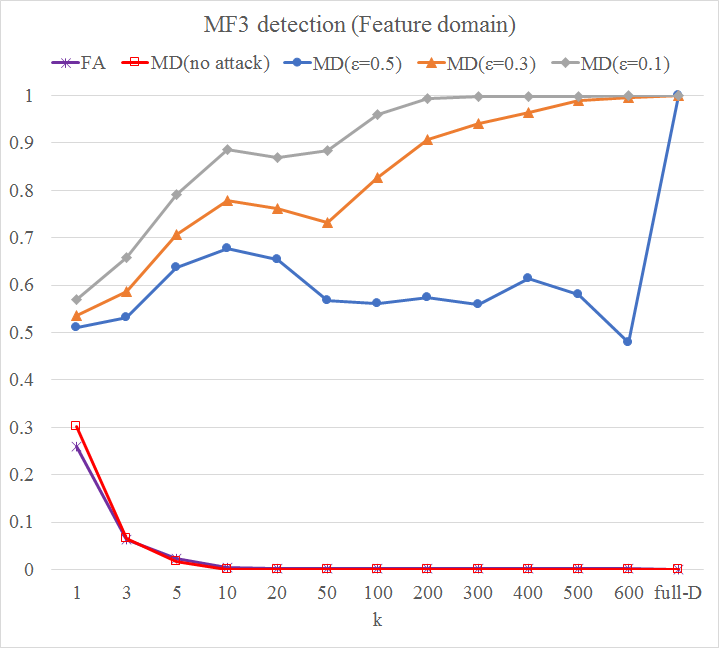}}\\
	\subfigure[]{
		\includegraphics[width=2.2in,height=2.2in]{./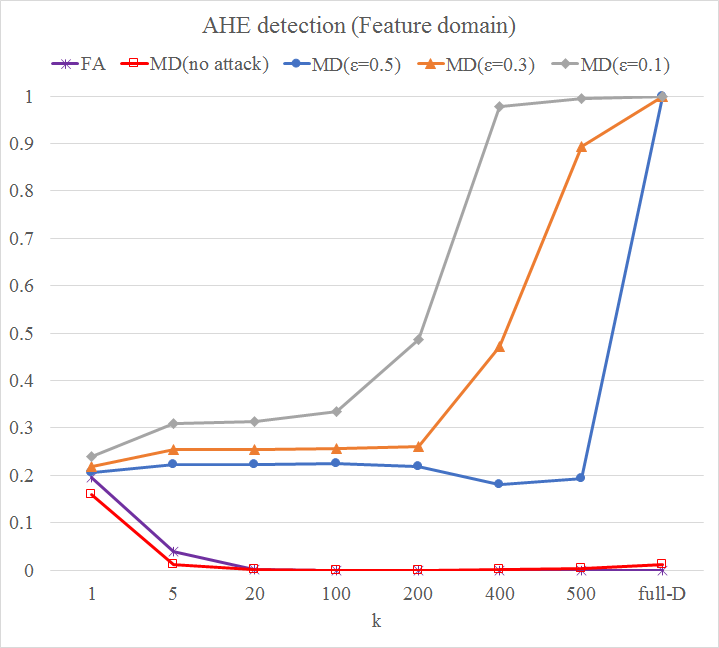}}
	\subfigure[]{
		\includegraphics[width=2.2in,height=2.2in]{./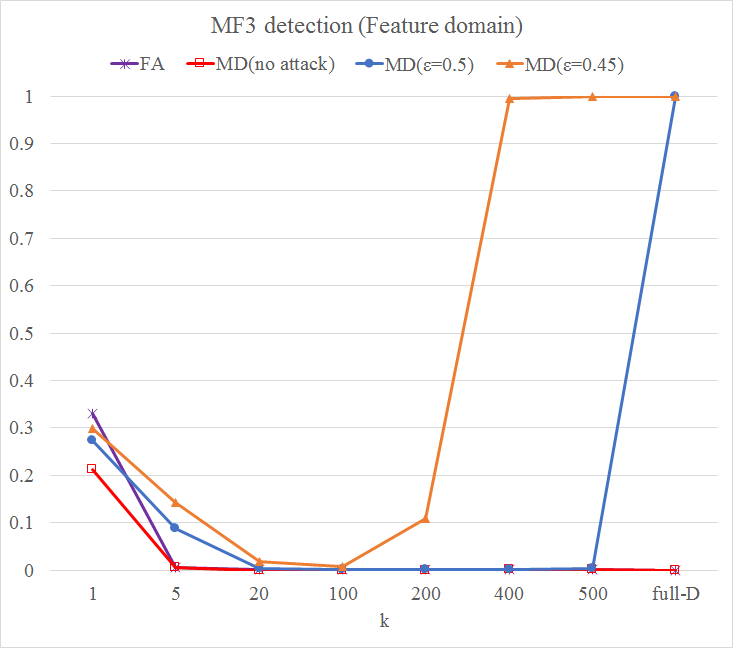}}		
	\caption{Error probability of the randomised feature detector without attacks and under a feature-domain attack, in the case of $ AHE $ and $ MF3 $ detection. (a) and (b) refer to the case of non-normalized features, while  (c) and (d) have been obtained by normalizing the features before feeding them to the detector. The missed detection (MD) probability is reported for both the non-attacked and attacked images, by letting $ \varepsilon = \{0.5; 0.3; 0.1\}$ (in (a), (b) and (c)), and  $ \varepsilon = \{0.5; 0.45\}$  (in (d)). The false alarm (FA) is also reported. The plots have been obtained averaging over 100 random choices of the matrix $S$.}
\label{fig:ExattFeatD} 
\end{figure*}
\begin{figure*}[htb!]
	\centering	
	\subfigure[]{
		\includegraphics[width=2.2in,height=2.2in]{./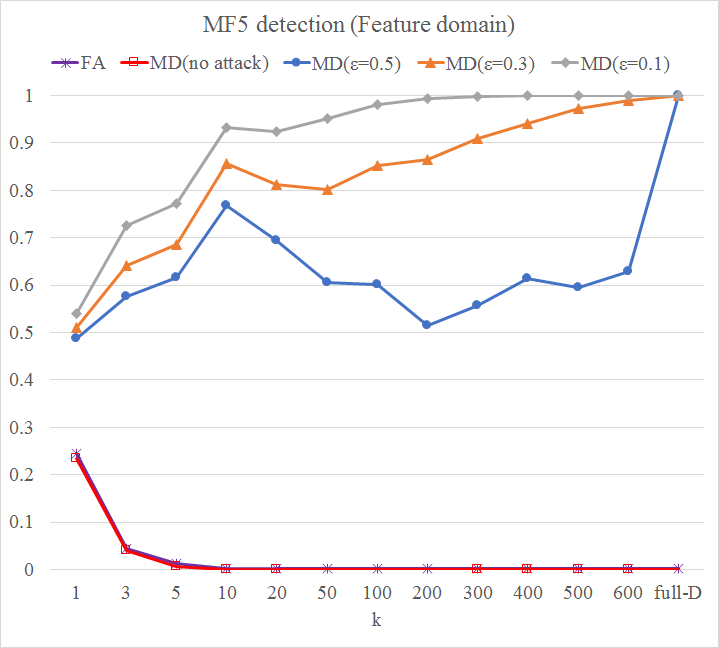}}
	\subfigure[]{
		\includegraphics[width=2.2in,height=2.2in]{./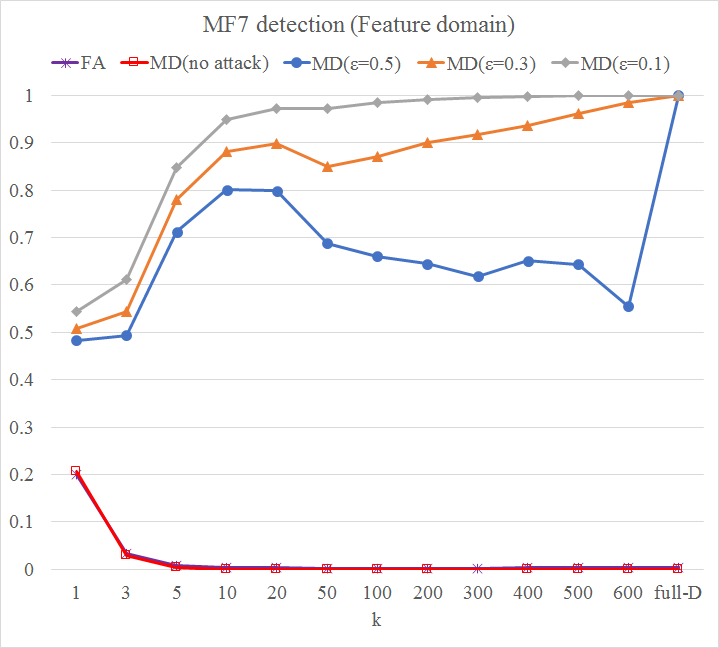}}		
	\caption{Error probability of the randomised feature detector without attacks and under a feature-domain attack, in the case of $ MF5 $ (a) and $ MF7 $ (b) detection with non-normalized features.
The MD probability is reported for both the non-attacked and attacked images, by letting $ \varepsilon = \{0.5; 0.3; 0.1\}$. The FA is also reported. The plots have been obtained averaging over 100 random choices of the matrix $S$. The curves of the FA and the MD for the no-attack case are overlapped.}
	\label{fig:ExattFeatD_MF5-7} 
\end{figure*}
Fig. \ref{fig:ExattFeatD} shows the error probability of the RFS detector as a function of $ k $ with and without attacks for the MF3 and AHE detectors. The plots shown in parts (a) and (b) refer to the case of non-normalized features, while the results shown in parts (c) and (d) have been obtained by normalising the feature vector.
Despite some unavoidable differences, the overall behaviour predicted by the theoretical analysis is confirmed. To start with, the experiments confirm that reducing the dimension of the feature set does not impact much the performance of the detector in the absence of attacks. In fact, even with $k=1$, the probability of correct detection is still around $0.8$ for AHE and $0.7$ for MF3 for the case of non-normalized features (Fig. \ref{fig:ExattFeatD} (a) and (b)). Similar results hold in the normalized feature case (Fig. \ref{fig:ExattFeatD} (c) and (d)). In the presence of attacks, the missed detection probability of the RFS detectors is significantly lower than that of the full-feature detector ($k = 686$ in the figure). For the non-normalized case, we observe that  when the stopping condition of the attack is equal to $0.5$, the missed detection probability drops even for rather large values of $k$, while for $\varepsilon = 0.3$ and $\varepsilon = 0.1$, smaller values of $k$ are needed to make the missed detection probability drop.
With regard to the detector based on normalised features, the experiments confirm the results predicted by the theory, since feature normalization results in a significantly more secure detector. In fact, for the normalized feature case we observe that the missed detection probability immediately drops when $k$ is below a critical very large value of $k$, which depends on the stopping condition of the attack. In particular, for the AHE case, the missed detection error drops below around $k = 400$ for $\varepsilon = 0.1$, $k= 500$ for $\varepsilon = 0.3$ and $k= 600$ for $\varepsilon = 0.5$. For the MF3 case, the situation is even more favorable. First of all we must mention that in this case the stopping conditions were set to $\varepsilon = 0.5$ and $0.45$ only. The reason for such a choice is that reaching $\varepsilon = 0.45$ already requires a large number of iterations and a very strong attack. In fact, the SNR of the attacked feature vector for $\varepsilon = 0.45$ is 4.19 (while it was 7.45 for the AHE case with $\varepsilon = 0.1$).

The results in Fig. \ref{fig:ExattFeatD} clearly show that a suitable value of $k$ can be found where the error probability in the absence of attacks (both false alarm and missed detection) is still negligible and the missed detection probability under attack is significantly smaller than 1.
The results obtained with the MF5 and MF7 detectors are shown in Fig. \ref{fig:ExattFeatD_MF5-7} for the case of non-normalized features. We see that, with respect to the MF3 case, a lower $k$ is necessary to get a small missed detection probability under attacks, especially for the MF7 case.
The motivation for this behaviour is that, in these cases, the separation between the two classes with respect to the intraclass feature variation is very large, thus diminishing the impact of the scale mismatch on the effectiveness of the attack carried out on the full feature set.

\subsection{Security vs robustness tradeoff: pixel domain}
\label{sec.pixdomain}

In this section, we focus on the security of the RFS detectors against the pixel domain attack introduced in \cite{BarZhip17} and briefly described in Section \ref{subsec.att2}. Before going on with the discussion of the results, we stress that attacking the images directly in the pixel domain requires a very high computational effort\footnote{With the attack method in \cite{BarZhip17}, attacking a single image on a 4-core PC, running at 3.3 GHz equipped with 16 GB RAM requires around two hours on the average. By considering that
each curve in the figures shown in this section requires attacking 600 images (and that the effectiveness of the attack must be evaluated for all the images, for all values of $k$ and 100 different selection matrices $S$ for each $k$),
each curve takes more than one month to be computed.}, so we limited our analysis to a lower number of cases, which, however, are sufficient to show the good performance achieved by the RFS detectors.

In Fig. \ref{fig:ExattPixelD}, we show the error probability of the RFS detectors based on non-normalized features. The overall trend of the error probability is similar to that observed for the feature-domain attack, however the error probability is now much smaller, confirming that the possibility of carrying out the attack in the feature domain, would represent a significant advantage for the attacker. By letting $k = 20$, for instance, the missed detection probability in the presence of attacks ranges from 0.36 to 0.55 in the case of AHE and from 0.29 to 0.59 in the case of MF3, while the missed detection probability without attacks is equal to 0.013 and 0, respectively. As for the feature domain attack, the false alarm probability remains low (around 0) even for very small values of $k$, e.g. $k = 10$.

\begin{figure}[t]
	\centering	
	\subfigure[]{
		\includegraphics[width=0.6\columnwidth]{./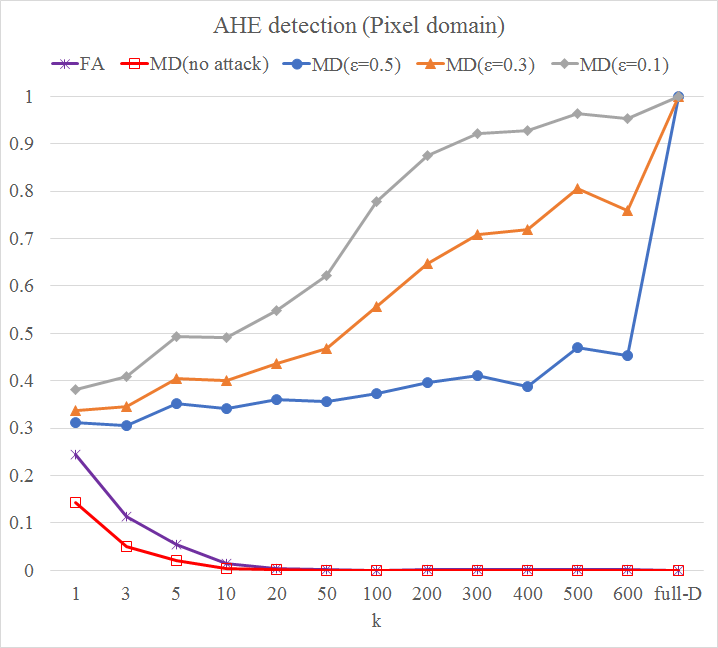}}\\
	\subfigure[]{
		\includegraphics[width=0.6\columnwidth]{./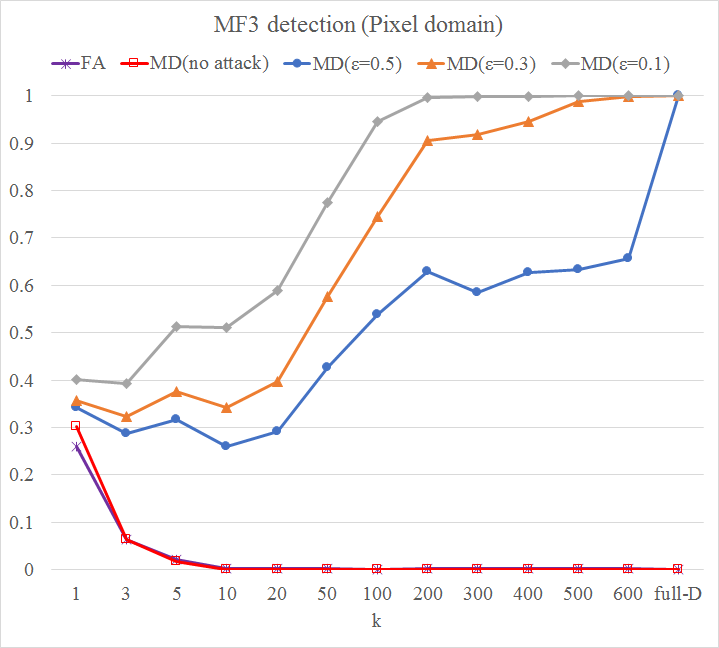}}	
	\caption{Error probability of the randomised feature detector under pixel-domain attack, in the case of $ AHE $ (a) and $ MF3 $ (b) detection with non-normalized features. The plots have been obtained by attacking with  $\varepsilon = \{0.5, 0.3, 0.1\}$ and averaging over 100 random choices of the matrix $S$. The FA and the MD  probabilities in the presence and absence of attacks are reported.}
	\label{fig:ExattPixelD} 
\end{figure}

%

We also carried out some experiments with the AHE-SVM detector relying on normalized features. The results we have got are reported in Fig. \ref{fig:pix_domain_nf}. The general trend of the error probability agrees with that predicted by theory, and confirms that the RFS detector based on normalized features is considerably more secure than the detector based on non-normalized features.

\begin{figure}[h!]
	\centering
		\includegraphics[width=0.6\columnwidth]{./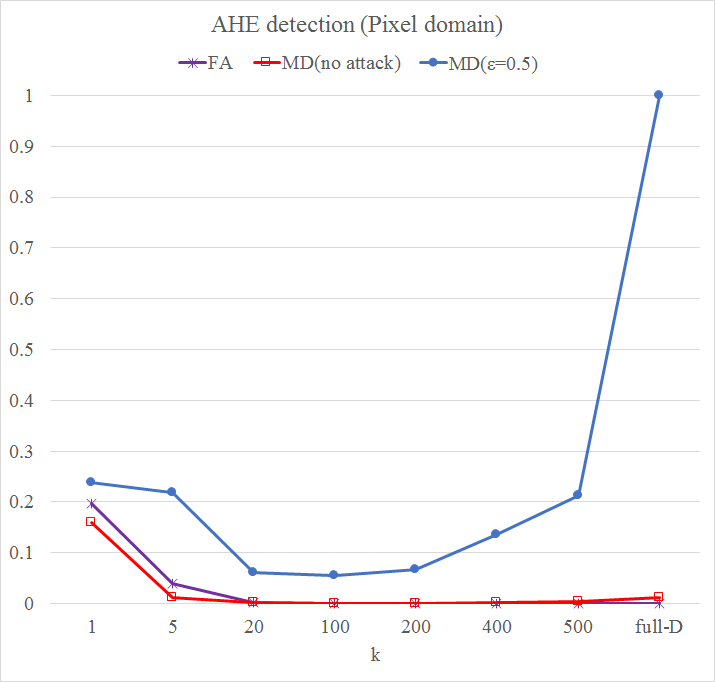}
	\caption{ Error probability of the RFS detector in the  $ AHE $ case with normalized features under a pixel domain attack. The MD probability is reported for both the non-attacked and attacked images, with $\varepsilon = 0.5$. The FA is also reported. The plots have been obtained averaging over 100 random choices of the matrix $S$.}
\label{fig:pix_domain_nf}
\end{figure}

\begin{figure}
	\centering
		\includegraphics[width=\columnwidth]{./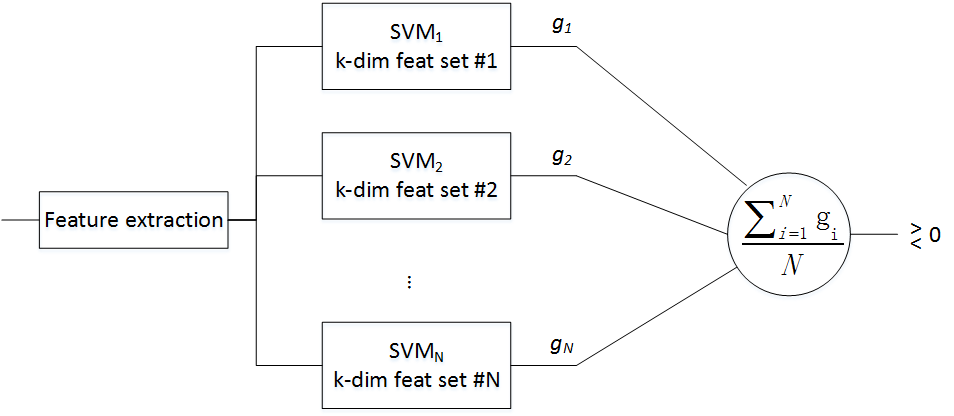}
	\caption{Block diagram of the alternative attack strategy described in Sect. \ref{sec.attackNew}. The number of reduced features $k$ adopted by the real classifier is assumed to be known to the attacker; $g_i$ denotes the discriminant function of the SVM trained on the  $i$-th feature set.}
\label{fig:newattackScheme}
\end{figure}

\subsection{Other attack strategies}
\label{sec.attackNew}
Throughout the paper we have assumed that the attacker designs his attack by targeting the full feature detector. While this is a reasonable approach, if we assume that the attack is aware of the defence mechanism other attack strategies are possible. By assuming that the value of $k$ is publicly available, the attacker could decide to attack a subset of $k$ features  at random,  or he could decide to attack the $k$ most important features chosen based on principal component analysis. While we leave a thorough exploration of alternative attack strategies to a future work, in this section we evaluate the security of the randomized feature detector when the attack targets an {\em expected} version of the classifier, computed by averaging $N$ random predictions obtained by $N$ different classifiers trained on $N$ different random subset of $k$ features. Specifically, we consider the attack setup depicted in Fig. \ref{fig:newattackScheme}. This strategy resembles the Expectation Over Transformation (EOT) attack which has been recently proposed against randomized neural network classifiers \cite{athalye2018obfuscated}. As suggested by its name, such method computes the gradient over the expected transformation of the input. In our case, the transformation consists in the selection of the $k$ features.

For sake of brevity, we implemented the attack depicted in the figure by operating in the feature domain and by considering the case of non-normalized features. With regard to the number of classifiers used to build the attack, we considered the cases $N=50$, and $N = 100$. The results we have got are reported in Fig. \ref{fig:newattack}. Upon inspection of the plots reported in the figure, we see that the effectiveness of the new attack increases with $N$ and it is comparable to that of the attack carried on the full feature detector (indeed always worse than that in the case of median filtering), thus confirming the security improvement achieved by the RFS detector, even in the presence of more elaborated attacks.

\begin{figure*}[htb!]
	\centering	
	\subfigure[]{
		\includegraphics[width=0.65\columnwidth]{./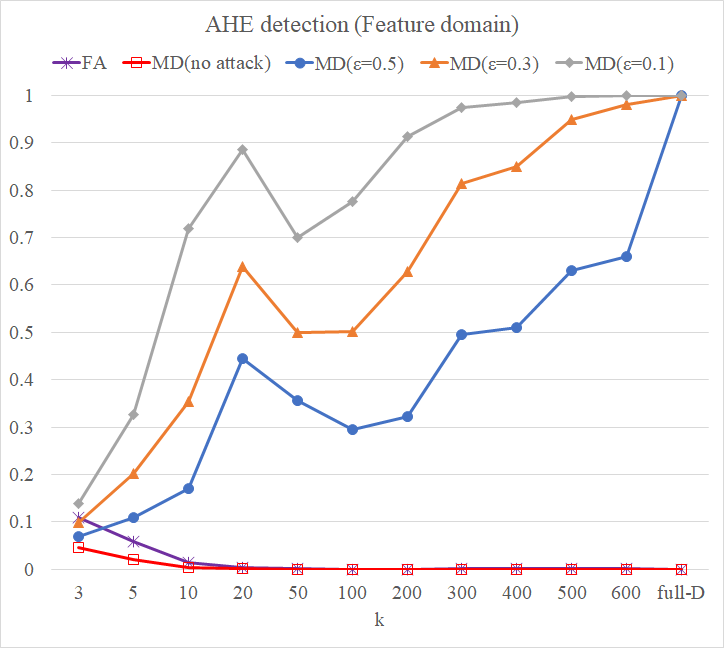}}
	\subfigure[]{
		\includegraphics[width=0.65\columnwidth]{./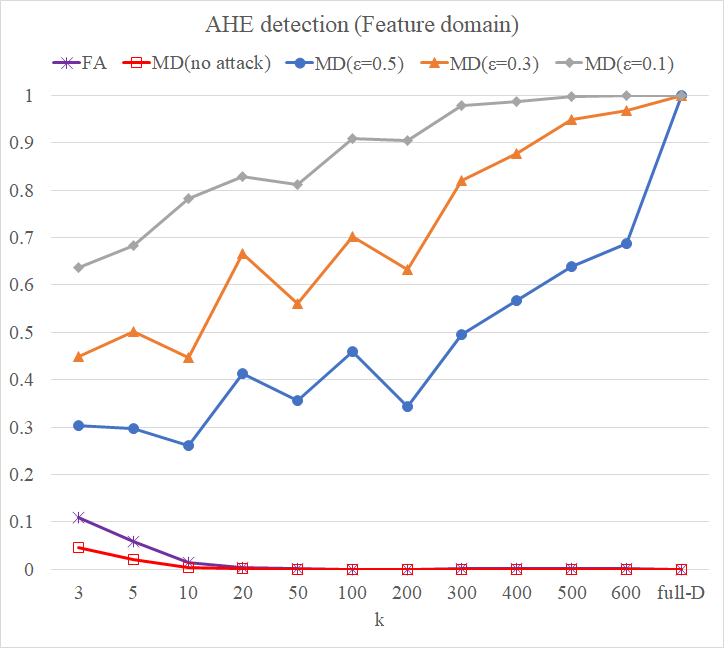}}\\
	\subfigure[]{
		\includegraphics[width=0.65\columnwidth]{./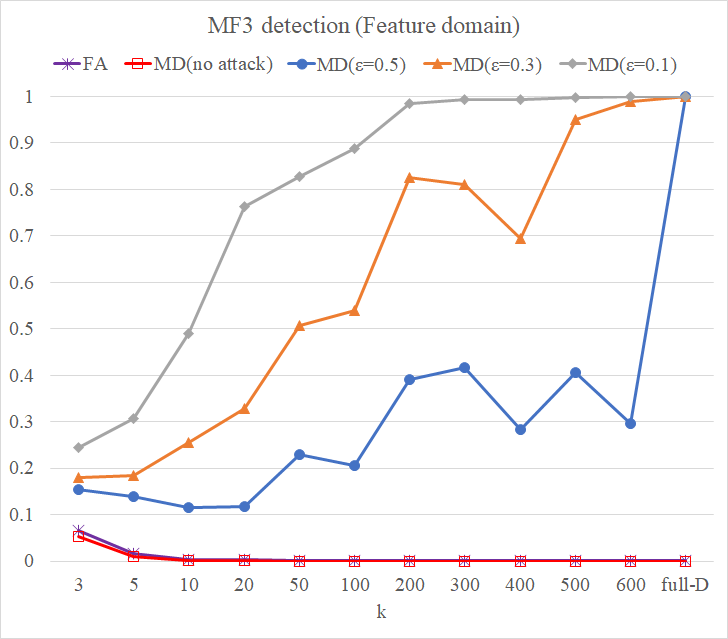}}
	\subfigure[]{
		\includegraphics[width=0.65\columnwidth]{./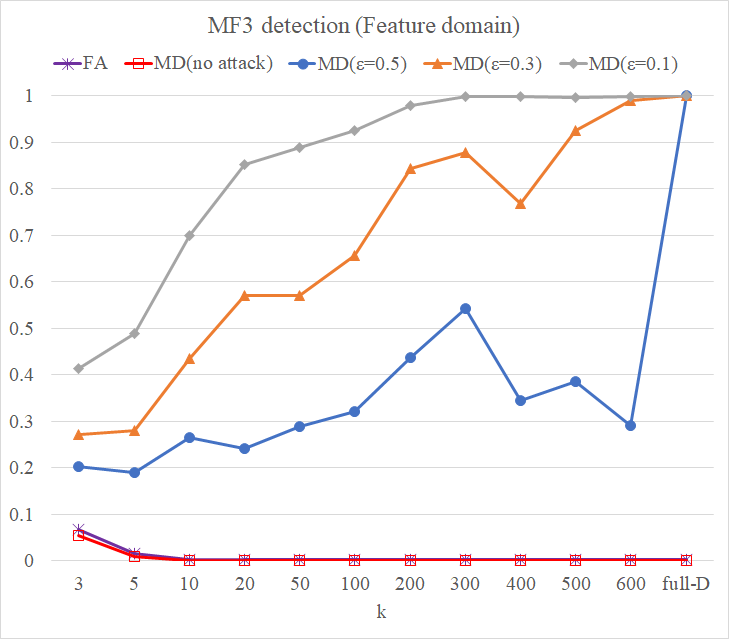}}
	\caption{Error probability of the randomised feature detector without attacks and under a feature-domain attack obtained by averaging 50 (cases (a) and (c)) and 100 (cases (b) and (d)) SVM detectors, in the case of AHE ((a) and (b)) and MF3 ((c) and (d)) with non-normalized features. The MD probability is reported for both the non-attacked and attacked images, by letting $ \varepsilon = \{0.5; 0.3; 0.1\}$. The FA is also reported. The plots have been obtained averaging over 100 random choices of the matrix $S$.}
\label{fig:newattack} 
\end{figure*}

\section{Conclusions}
\label{sec.con}

The use of machine learning tools in the context of image forensics is endangered by the relative ease with which such tools can be deceived by an informed attacker. In this paper, we have considered a particular instantiation of the above problem, wherein the attacker knows the architecture of the detector, the data used to train it and the class of features the detector relies on. Specifically, we introduced a mechanism whereby the analyst designs the detector by randomly choosing the features used by the detector from an initial large set $\VV$ also known by the attacker. We first presented some theoretical results, obtained by relying on a simplified model, suggesting that random feature selection permits to greatly enhance security at the expense of a slight loss of performance in the absence of attacks. Then, we applied our strategy to the detection of two specific image manipulations, namely adaptive histogram equalization and median filtering, by resorting to SVM classification, based on the SPAM feature set. The security analysis, carried out by attacking the two detectors both in the feature and the pixel domain, confirms that security can indeed be improved by means of random feature selection, with a gain that is even more significant than that predicted by the simplified theoretical analysis. In fact, the probability of a successful attack drops from nearly 1 to less than 0.4 in the realistic case that the attack is carried out in the pixel domain. We remark that, while an error probability lower than 0.4 would be preferable, this may be already enough to discourage the attacker in applications wherein ensuring that the attack is successful is a vital requirement for the attacker.

This work is only a first attempt to exploit randomisation, noticeably feature space randomisation, to restore the credibility of the forensic analysis in adversarial settings. From a theoretical perspective, more accurate models could be used to further reduce the gap between the analysis carried out in Section \ref{sec.theory} and the conditions encountered in real applications. From a practical point of view, the use of random feature selection with detectors other than SVMs could be explored, together with the adoption of much larger feature sets, e.g., the entire set of rich features \cite{Frid12rich}. Another interesting research direction consists in the extension of our approach to counter attacks against detectors based on deep learning, specifically convolutional neural networks. In such a case, in fact, the features used by the detector are not chosen by the analyst, since they are determined by the network during the training phase, hence calling for the adoption of other forms of randomisation (see \cite{srivastava2014dropout} and \cite{carlini2017adversarial}  for some preliminary works in this direction).

\section*{Appendix}
\label{sec.Appendix}
\subsection*{Proof of Lemma \ref{eq.lemmastat}.}

\noindent Let $\rho_{r} = \rho_{r,1} - \theta \rho_{r,2}$, where 
\begin{equation}
\label{eq.rho_r1}
\rho_{r,1} = {\bf u}_r^T \Sigma_r^{-1} {\bf v}_r,
\end{equation}
\begin{equation}
\label{eq.rho_r2}
\rho_{r,2} = {\bf w}^T {\bf v}.
\end{equation}
By observing that $\theta$ does not depend on ${\bf v}$ and hence it is a fixed value for a given $S$, the statistics of $\rho_r$ under attack depend only on the statistics of $\rho_{r,1}$ and $\rho_{r,2}$, which are two Gaussian random variables. We now prove that the statistics of  $\rho_{r,1}$ and $\rho_{r,2}$ are given as follows:
\begin{equation}
\label{eq.statrho1}
\begin{aligned}
E[\rho_{r,1}] &= y, & E[\rho_{r,1}^2] &= y + y^2,  &var[\rho_{r,1}] &= y,\\
E[\rho_{r,2}] &= x, & E[\rho_{r,2}^2] &= x + x^2,  &var[\rho_{r,2}] &= x,
\end{aligned}
\end{equation}
and
\begin{equation}
\label{eq.statrho2}
\begin{aligned}
E[\rho_{r,1}\rho_{r,2}] &= xy +y,\\
cov[\rho_{r,1}\rho_{r,2}] &= y,
\end{aligned}
\end{equation}
To derive the above relations, we start by proving that:
\begin{equation}
\label{eq_lemma11}
E[{\bf v}{{\bf v}^T}] = \Sigma  + {\bf u} {{\bf u} ^T}, E[{\bf v}_r{{\bf v}_r^T}] = \Sigma + {\bf u}_r {{\bf u}_r ^T}.
\end{equation}
To prove such properties, we observe that the element in  position $(i,j)$ of matrix $ {\bf v}{\bf v}^T $ is equal to $ v_i v_j $. The element in  position $(i,j)$ of the covariance matrix $\Sigma$ can be computed as:
\begin{equation}
\Sigma = E \left[ {\left( {{v_i} - {u _i}} \right)\left( {{v_j} - {u _j}} \right)} \right] = E\left[ {{v_i}{v_j}} \right] - {u _i}{u _j}.
\end{equation}
Thus, $E\left[ {{v_i}{v_j}} \right] = \Sigma  + {u _i}{u _j}$ and $E[{\bf v}{{\bf v}^T}] = \Sigma  + {\bf u} {{\bf u} ^T}$. In the same way
\begin{equation}
E[{\bf v}_r{{\bf v}_r^T}] = \Sigma_r + {\bf u}_r {{\bf u}_r ^T}.
\end{equation}	
The expectation of $ \rho_{r,1} $ is clearly equal to:
\begin{equation}
E[\rho_{r,1}]  = E[{{\bf u} _r}^T{\Sigma_r^{ - 1}}{{\bf v}_r}] = {{\bf u} _r}^T{\Sigma_r^{ - 1}}{{\bf u}_r} = y,
\end{equation}
hence, based on \eqref{eq_lemma11}, we have:
\begin{equation}
\begin{split}
E[\rho_{r,1}^2] &= E[{{\bf u} _r}^T{\Sigma_r^{ - 1}}{{\bf v}_r}{{\bf v}_r}^T{\Sigma_r^{ - 1}}{{\bf u}_r}] \\
& = {{\bf u} _r}^T{\Sigma_r^{ - 1}} E[{{\bf v}_r}{{\bf v}_r}^T] {\Sigma_r^{ - 1}}{{\bf u}_r} \\
& = {{{\bf u} _r}^T{\Sigma_r^{ - 1}}\left( {\Sigma_r + {{\bf u} _r}{{\bf u} _r}^T} \right){\Sigma_r^{ - 1}}{{\bf u}_r}} \\
& = y+ {y^2}.
\end{split}
\end{equation}
In addition, the variance of $ \rho_{r,1} $ boils down to:
\begin{equation}
var[\rho_{r,1}] = E[\rho_{r,1}^2] -E[\rho_{r,1}]^2 = y.
\end{equation}
In a similar manner, we can prove that:
\begin{equation}
\begin{split}
E[\rho_{r,2}]  &= x, \\
E[\rho_{r,2}^2]  &= x+x^2,  \\
var[\rho_{r,2}] &= x.
\end{split}
\end{equation}
Finally, by exploiting again the properties in \eqref{eq_lemma11}, we can write:
\begin{equation}
\begin{split}
E[\rho_{r,1}\rho_{r,2}]  &= E[{{\bf u} _r}^T{\Sigma_r^{ - 1}}{{\bf v}_r}{\bf v}^T{\Sigma^{ - 1}}{\bf u}] \\
&= E[{{\bf u} _r}^T{\Sigma_r^{ - 1}}{S{\bf v}}{\bf v}^T{\Sigma^{ - 1}}{\bf u}] \\
&= {{\bf u} _r}^T{\Sigma_r^{ - 1}}{S E[{\bf v}}{\bf v}^T] {\Sigma^{ - 1}}{\bf u} \\
&= {{\bf u} _r}^T{\Sigma_r^{ - 1}} S (\Sigma  + {\bf u} {\bf u}^T) {\Sigma^{ - 1}}{\bf u} \\
&=y+xy,
\end{split}
\end{equation}
and then immediately:
\begin{equation}
cov[\rho_{r,1}\rho_{r,2}]  =E[\rho_{r,1}\rho_{r,2}]-E[\rho_{r,1}]E[\rho_{r,2}]=y.
\end{equation}

The lemma follows immediately from relations \eqref{eq.statrho1}-\eqref{eq.statrho2}, by observing that 
\begin{equation}
E[\rho] = E[\rho_{r,1}] - \theta E[\rho_{r,2}]
\end{equation}
and 
\begin{equation}
var[\rho] = var[\rho_1] + \theta^2 var[\rho_2] - 2\theta cov[\rho_{r,1}\rho_{r,2}].
\end{equation}

\section*{Acknowledgment}
This work was supported by the National Key Research and Development of China (No. 
2016YFB0800404), the National Science Foundation of China (Nos. 61532005, 61332012, 61572052,61672090 and U1736213), and the Fundamental Research Funds for the Central Universities (No. 2018JBZ001), and the Key Research and Development of Hebei Province of China (No.17210332). This work has been partially supported by DARPA and AFRL under the research grant number FA8750-16-2-0173. The United States Government is certified to reproduce and distribute reprints for Governmental objectives notwithstanding any copyright notation thereon. The views and conclusions consist of herein are those of the authors and should not be interpreted as necessarily representing the official policies or endorsements, either expressed or implied, of DARPA and AFRL or U.S. Government.

\bibliographystyle{IEEEtran}
\bibliography{RanFeat}

\end{document}


The element in the $ i,j  $ position of $ {\bf v}{\bf v}^T $ can be denoted as $ v_i v_j $, where $ n $ is the length of $ {\bf v} $. The element in the $ i,j $ position of the covariance matrix $\Sigma$ can be computed as:
\begin{equation}
\Sigma = E \left[ {\left( {{v_i} - {u _i}} \right)\left( {{v_j} - {u _j}} \right)} \right] = E\left[ {{v_i}{v_j}} \right] - {u _i}{u _j}
\end{equation}
Thus, $E\left( {{v_i}{v_j}} \right) = \Sigma  + {u _i}{u _j}$, then $E({\bf v}{{\bf v}^T}) = \Sigma  + {\bf u} {{\bf u} ^T}$. In the same manner,
\begin{equation}
E({\bf v}_r{{\bf v}_r^T}) = \Sigma + {\bf u}_r {{\bf u}_r ^T}.
\end{equation}

\subsection*{The proof of Lemma 2:}
\begin{equation}
\begin{split}
E\left( {{{\left( {{{\bf u} ^T}{\Sigma ^{ - 1}}{\bf v}} \right)}^2}} \right) &= E\left( {{{\bf u} ^T}{\Sigma ^{ - 1}}{\bf v}{{\bf v}^T}{{\rm{\Sigma }}^{ - 1}}{\bf u} } \right) \\
& = E\left( {{{\bf u} ^T}{\Sigma ^{ - 1}}\left( {\Sigma  + {\bf u} {{\bf u} ^T}} \right){\Sigma ^{ - 1}}{\bf u} } \right) \\
& = x + {x^2}
\end{split}
\end{equation}

\subsection*{The proof of Lemma 3:}
The element in the $ i,j $ position of $ {\bf v}_r{\bf u}_r^T $ can be denoted as $ v_a u_b $, where $ v_a $ and $ u_b $ is one of the element of $ {\bf v} $ and ${\bf u} $, respectively. If given $\Sigma  = \left( {{B_{i,j}}} \right),{\Sigma ^{ - 1}} = \left( {B_{i,j}^*} \right)$, then ${{\bf u} ^T}{\Sigma^{ - 1}}{\bf v} = \mathop \sum \limits_{i,j} {u _i}B_{i,j}^*{v_j}$. Further, $E\left( {\left( {{{\bf u} ^T}{\Sigma ^{ - 1}}{\bf v}} \right){{\bf v}_r}{{\bf u} _r}^T} \right) = \mathop \sum \limits_{i,j} {u _i}{u _b}B_{i,j}^*E\left( {{v_a}{v_j}} \right)$. It is easily proved that $E\left( {{v_a}{v_j}} \right) = {B_{a,j}} + {u _a}{u _j}$. Thus,
\begin{align}
E\left( {\left( {{{\bf u} ^T}{{\rm{\Sigma }}^{ - 1}}{\bf v}} \right){{\bf v}_r}{{\bf u} _r}^T} \right) = \mathop \sum \limits_{i,j} {u _i}{u _b}B_{i,j}^*{B_{a,j}}+\mathop \sum \limits_{i,j} {u _i}{u _b}B_{i,j}^*{u _a}{u _j}.
\end{align}

The first item in the right side is:
\begin{align}
\mathop \sum \limits_{i,j} {u _i}{u _b}B_{i,j}^*{B_{a,j}} = \mathop \sum \limits_i {u _i}{u _b}\delta \left( {i - a} \right) = {u _a}{u _b}.
\end{align}

The second item in the right side is:
\begin{equation}
\begin{split}
\mathop \sum \limits_{i,j} {u _i}{u _b}B_{i,j}^*{u _a}{u _j} &= {u _a}{u _b}\mathop \sum \limits_{i,j} {u _i}B_{i,j}^*{u _j} \\
& = {u _a}{u _b}\left( {{{\bf u} ^T}{\Sigma ^{ - 1}}{\bf u} } \right) = {u _a}{u _b}x.
\end{split}
\end{equation}
\[\]
Then, \[E\left( {\left( {{{\bf u} ^T}{\Sigma ^{ - 1}}{\bf v}} \right){{\bf v}_r}{{\bf u} _r}^T} \right) = x{{\bf u} _r}{{\bf u} _r}^T + {{\bf u} _r}{{\bf u} _r}^T.\]


\subsubsection{Case 1: ${\Sigma_i } = {\sigma ^2}I$}

We assume that the features are statistically independent, and each feature has the same variance $\sigma ^2$. The pdf of $v$ under the two hypotheses is as follows:
\begin{align}
& H_0: {\bf v} \simeq {\mathcal N}({\bf u}, \sigma^2 I) \\ \nonumber
& H_1: {\bf v} \simeq {\mathcal N}({\bf -u}, \sigma^2 I)
\label{eq.model1}
\end{align}

The discrimination function of $ H_0 $ can be rewritten as: ${g_0} =  - \frac{1}{{2{\sigma ^2}}}\left( {{{\bf v}^T}{\bf v} - 2{\bf u} ^T {\bf v} + {{\bf u} ^T}{\bf u} } \right) - \frac{1}{2}\ln {\sigma ^{2n}}$. The decision boundary can be obtained when $ g_0=g_1 $. The detector of $ H_0 $ is:
\begin{equation}
\rho  = {{\bf u} ^T}{\bf v} > 0.
\label{eq.rho1_fullD}
\end{equation}

The statistics of $ \rho $ are easily derived $\rho  \simeq {\cal N}\left( {{\left\| \bf u  \right\| ^2},{\left\| \bf u  \right\| ^2}{\sigma ^2}} \right)$. Given the above the probability of the successful detector is related to the $ z $-value of the normal distribution, which is equal to
\begin{equation}
{z} = \frac{\left\| \bf u  \right\|}{\sigma }.
\label{eq.z1_fullD}
\end{equation}

The randomized detector makes decision by replying on subset of $ k $ features. To describe such a detector we introduce a vector ${\bf s}=\left(s_1,s_2,\ldots,s_n\right)$, with $s_i=\{ 0, 1\}$ stating whether the $i$-th features is retained $s_i= 1$ or not $s_i= 0$. We assume that ${\bf s}$ is fixed, hence we will not consider the $s_i$ to be random variables (such a limitation can be overcome by averaging our results across ${\bf s}$). Vector ${\bf s}$ contains exactly $k$ ones and $n-k$ zeros. Let us also introduce the reduced versions of vectors $\bf v$ and $\bf u$, indicated respectively by ${\bf v}_r$ and ${\bf u}_r$, containing only the features used by the random detector. Clearly we have:
\begin{align}
& H_0: {\bf v_r} \simeq {\mathcal N}({\bf u_r}, \sigma^2 I) \\ \nonumber
& H_1: {\bf v_r} \simeq {\mathcal N}({\bf -u_r}, \sigma^2 I)
\label{eq.model1_random}
\end{align}

The optimum reduced-size detector decides for $H_0$ if:
\begin{equation}
{\rho _r} = \sum\limits_{i = 1}^k {{v_{r,j}}{u_{r,j}} > 0}
\label{eq.rho1_reduced}
\end{equation}
with  $\rho _r \simeq {\cal N}\left( {{{\bf u}_r ^2},{{\bf u}_r ^2}{\sigma ^2}} \right).$
The $z$-value of the reduced detector is:
\begin{equation}
{z_r} = \frac{\left\| {\bf u}_r  \right\| }{\sigma } = \frac{\left\| {\bf u}_r  \right\| }{\left\| \bf u  \right\| } \frac{\left\| \bf u  \right\| }{\sigma } = \eta z ,
\label{eq.z1_reduced}
\end{equation}
where $ \eta= {\left\| {\bf u}_r  \right\| }/{\left\| {\bf u}  \right\| }$ indicates the norm reduction of the average vector $\bf u$ due to the use of a reduced set of features.

Being ignorant of the subset of features used by the detector, the attacker attacks the full detector. We assume that he adopts a minimum distance attack, consisting in modifying the feature vector in such a way that $\rho=0$\footnote{For sake of simplicity we consider an attack carried out when $H_0$ holds.}. It is easy to see that such an attack produces the following attacked vector ${\bf v}^*$:
\begin{equation}
v_i^* = v_i - \bigg( \frac{\sum_{j=i}^n v_i u_i}{|| {\bf u} ||^2} \bigg) u_i.
\label{eq.attack}
\end{equation}
In order to be sure that his attacks is effective, the attacker may opt for a more powerful attack by introducing a margin factor $\alpha$ as follows:
\begin{equation}
v_i^* = {v_i} - \alpha(\frac{{\sum\nolimits_{i = 1}^n {{v_i}{u _i}} }}{{{{\left\| \bf u  \right\|}^2}}}){u _i},
\label{eq.attack_alpha}
\end{equation}
 with $\alpha >1$ denotes the strength of attack.

Remark.For sake of simplicity we assume that the attacker always applies the attack in \eqref{eq.attack} or \eqref{eq.attack_alpha}. A more clever strategy would be to apply the attack only when the detector makes a correct decision that is when $ {\sum\nolimits_{i = 1}^n {{v_i}{u _i}} }> 0 $. The difference between the two attacks is negligible if the full feature detector has good performance,that is when $ z $ is large enough.

When the attack is applied to the full feature detector, it is easy to see that the attack is always successful, in fact in that case we deterministically have $\rho = 0$ (when $\alpha >1$, we  have $\rho < 0$). As we will see, the situation is rather different when the analyst adopts a reduced set of features.

As a consequence of the attack the value of $\rho_r$ can be computed as follows:
\begin{align}
\rho_r = \sum_{i=1}^n v_i^* s_i u_i & = \sum_{i=1}^n \bigg[ v_i -  \alpha \bigg( \frac{\sum_{j=i}^n v_j u_j}{|| {\bf u} ||^2} \bigg) u_i \bigg] s_i u_i \nonumber \\
& = \sum_{i=1}^n v_i u_i s_i - \alpha \bigg( \frac{\sum_{j=i}^n v_j u_j}{|| {\bf u} ||^2} \bigg) \sum_{i=1}^n s_i u_i u_i \nonumber \\
&= \sum_{i=1}^n v_i u_i s_i - \alpha \sum_{j=i}^n v_j u_j \frac{||{\bf u}_r ||^2}{|| {\bf u}||^2} \nonumber \\
& = \sum_{i=1}^n v_i u_i \bigg(s_i - \alpha  \frac{||{\bf u}_r ||^2}{|| {\bf u}||^2}  \bigg).
\label{eq.rho1_reduced_attack}
\end{align}
For a fixed ${\bf s}$, then, $\rho_r$ is a linear combination of independent Gaussian r.v. and hence it is itself a Gaussian r.v. with mean and variance given by:
\begin{align}
E[\rho_r] &=  (1-\alpha) ||{\bf u}_r ||^2\\
var[\rho] &= \sigma^2 \sum_{i=1}^n u_i^2 \bigg(s_i - \alpha  \frac{||{\bf u}_r ||^2}{|| {\bf u}||^2}  \bigg)^2 \nonumber \\
&= \sigma^2 \bigg(  || {\bf u}_r ||^2  -2 \alpha \frac{|| {\bf u}_r ||^4}{ ||{\bf u}||^2} + \alpha^2 \frac{|| {\bf u}_r ||^4}{ ||{\bf u}||^2} \bigg) \nonumber \\
&= \sigma^2 || {\bf u}_r ||^2 \bigg( 1 - 2 \alpha \eta^2 + \alpha^2 \eta^2 \bigg).
\label{eq.rho1_reduced_attack_stat}
\end{align}

The probability that the attack does not succeed depends on the $z$-value of the above Gaussian: specifically, we have:
\begin{align}
z_{att} &= \frac{(\alpha -1)||{\bf u}_r ||}{\sigma \sqrt{1 - 2 \alpha \eta^2 + \alpha^2 \eta^2}} \nonumber \\
&= z_r \frac{(\alpha -1)}{\sqrt{1 - 2 \alpha \eta^2 + \alpha^2 \eta^2}} =  \frac{ z (\alpha -1)}{\sqrt{\frac{1}{{\eta}^2 } - 2 \alpha  + \alpha^2 }}.
\label{eq.z1_reduced_att}
\end{align}

Interestingly, if $\alpha = 1$, $z = 0$ and the attack fails with probability 0.5 regardless of all the other parameters of the problem. Another interesting case is obtained when $\eta \rightarrow 0$, that is when the randomised detector relies on a very small number of features. In this case we have $z_{att} \approx z \eta (\alpha -1)$ and the attack fails with high probability (still larger than 0.5). Such a security, however, comes at the price of a reduced effectiveness of the detector in the absence of attacks, as stated by \eqref{eq.z1_reduced}. Eventually, when $\eta \rightarrow 1$, $z_{att} = z$ and the attack succeeds with high probability (we assume that the full feature detector is a good one and hence $z$ is large).


{\em Consider three groups of 1000-D samples following Gaussian distribution $ {\mathcal N}({\bf u}, \sigma^2 I) $, where $ \sigma=0.2 $, $I$ is unit matrix. The means are constant $ {\bf u}_1={[2/\sqrt {1000} , \ldots , 2/\sqrt {1000} ]^T}$, a decline line $ {u_2} = {[0.12\left( {1 - 1/n} \right), \ldots ,0.12\left( {1 - n/n} \right)]^T} $, and a power function $ {u_3} = 0.2 \times {2^{ - \lambda (i)}} $, respectively, where $ i \in \left[ {1,n} \right] $, $ n=1000 $ and $ \lambda \in [0,8,16, \ldots ,1000]$. The minimum distance attack is performed based on \eqref{eq.attack} with various $\alpha$. The performance of randomized detector is evaluated by \eqref{eq.z1_reduced} and \eqref{eq.z1_reduced_att}. The evaluations are repeated 100 times and computed the mean results. Since three groups samples show similar performances, the results of $ \bf u_3 $ are demonstrated as follows. Fig. \ref{fig:AchiePerform:a} shows that the detector has good accuracy ($ >80\% $) even using low dimensional (1\% of full dimension ) features. From Fig. \ref{fig:AchiePerform:b}, the unsuccessful probability of attack always larger than 0.5 when $ \alpha \ge 1 $.
\begin{figure*}[htbp]
	\centering	
	\subfigure[]{
		\label{fig:AchiePerform:a} 
		\includegraphics[width=2.5in]{prob_redetectorVSk_u3.jpg}}
	\subfigure[]{
		\label{fig:AchiePerform:b} 
		\includegraphics[width=2.5in]{prob_attackVSalpha_u3.jpg}}	
	\subfigure[]{
		\label{fig:AchiePerform:c} 
		\includegraphics[width=2.5in]{prob_redetectorVSk_u3_generate.jpg}}
	\subfigure[]{
		\label{fig:AchiePerform:d} 
		\includegraphics[width=2.5in]{prob_attackVSalpha_u3_generate.jpg}}
	\caption{Performance of samples following $ {\mathcal N}({\bf u}_3, \sigma^2 I) $ computed in closed form (the first row) and generated samples (the second row). (a, c) is successful probability of randomized detector against number of the random features. (b, d) is unsuccessful probability of attack against the power of attack.}
	\label{fig:AchiePerform} 
\end{figure*}
Generate 9000 samples distributed as $ {\mathcal N}({\bf u}_3, 0.04 I) $, the similar performance can be obtained in Fig. \ref{fig:AchiePerform:c} and \ref{fig:AchiePerform:d}. }